\newtheorem{Thm}{Theorem}
\newtheorem{Lem}[Thm]{Lemma}
\newtheorem{Def}{Definition}
\newtheorem{Cor}[Thm]{Corollary}
\newtheorem{Fact}{Fact}
\newif\ifdraft\drafttrue
\newcommand{\zd}[1]{{\color{green}{[{#1}---Duan]}}}
\newcommand{\lo}[1]{{\color{red}{[{#1}---Ong]}}}
\newcommand{\tc}[1]{{\color{blue}{[{#1}---Tian]}}}
\newcommand{\zd}[1]{}
\newcommand{\lo}[1]{}
\newcommand{\tc}[1]{}
\newcommand\anglebra[1]{\langle #1 \rangle}
\begin{document}
%
% paper title
% can use linebreaks \\ within to get better formatting as desired
\title{B\"{u}chi Determinization Made Tighter}

% author names and affiliations
% use a multiple column layout for up to two different
% affiliations

%\author{\IEEEauthorblockN{Cong Tian and  Zhenhua Duan}
%\IEEEauthorblockA{Institute of Computing Theory and Technology (ICTT)\\
%Xidian University, Xi'an,
%710071, P.R. China\\
%\{ctian, zhhduan\}@mail.xidian.edu.cn} }

% conference papers do not typically use \thanks and this command
% is locked out in conference mode. If really needed, such as for
% the acknowledgment of grants, issue a \IEEEoverridecommandlockouts
% after \documentclass

% for over three affiliations, or if they all won't fit within the width
% of the page, use this alternative format:
%
\author{\IEEEauthorblockN{Cong Tian and
Zhenhua Duan%\IEEEauthorrefmark{1}
%C.-H. Luke Ong\IEEEauthorrefmark{2}
}
\IEEEauthorblockA{ICTT and ISN Lab, Xidian University, Xi'an, 710071, China\\ Emails:\{ctian,zhhduan\}@mail.xidian.edu.cn }
%\IEEEauthorblockA{\IEEEauthorrefmark{2}Department of Computer Science, University of Oxford, Oxford, OX1 3QD, UK\\
%Email: lo@cs.ox.ac.uk}
}

% use for special paper notices
%\IEEEspecialpapernotice{(Invited Paper)}

\maketitle

\thispagestyle{plain}
\pagestyle{plain}

\begin{abstract}
By separating the principal acceptance mechanism from the concrete
acceptance condition of a given B\"{u}chi automaton with $n$ states,
Schewe presented the construction of an equivalent deterministic
Rabin transition automaton with $o((1.65n)^n)$ states via \emph{history trees}, which can be
simply translated to a standard Rabin automaton with $o((2.66n)^n)$
states. Apart from the inherent simplicity, Schewe's construction
improved Safra's construction (which requires $12^nn^{2n}$ states).
However, the price that is paid is the use of at most $2^{n-1}$ Rabin pairs
(instead of $n$ in Safra's construction). So, whether the index complexity of Schewe's construction can be improved is an interesting problem. In this paper, we improve Schewe's construction to $2^{\lceil (n-1)/2\rceil}$  Rabin pairs  with
the same state complexity. We show that when keeping the state complexity, the index complexity of proposed construction is tight already.
\end{abstract}

\begin{IEEEkeywords}
B\"{u}chi automata; Determinization; Rabin automata; Safra trees;
Verification
\end{IEEEkeywords}

% For peer review papers, you can put extra information on the cover
% page as needed:
% \ifCLASSOPTIONpeerreview
% \begin{center} \bfseries EDICS Category: 3-BBND \end{center}
% \fi
%
% For peerreview papers, this IEEEtran command inserts a page break and
% creates the second title. It will be ignored for other modes.
\IEEEpeerreviewmaketitle

\section{Introduction}\label{Sec:Intro}
The theory of automata over infinite words underpins much of formal
verification. In automata-based model checking
\cite{CGP00,Katoen99}, to decide whether a given system described by
an automaton satisfies a desired property specified by a B\"{u}chi
automaton \cite{Buchi62}, one constructs the intersection of the
system automaton with the complementation of the property automaton,
and checks its emptiness \cite{CGP00,Katoen99}. To complement
B\"{u}chi automata, Rabin, Muller, Parity as well as Streett
automata \cite{Thomas97} are often involved, since deterministic
B\"{u}chi automata are not closed under complementation. Recently,
co-B\"{u}chi automata have attracted much attention {because of} its
simplicity and its surprising utility \cite{BK09,BK10,BK11}.
Further, in LTL model checking, the transformation from LTL formulas
to B\"{u}chi automata is a key procedure, where generalized
B\"{u}chi automata are often used as an intermediary
\cite{CGP00,Katoen99}.

B\"{u}chi automata were first introduced as a tool for proving the
decidability of monadic second order logic of one successor (S1S)
\cite{Buchi62}. They are nearly the same as finite state automata
{except} for the acceptance condition:  a run of a finite state
automaton is accepting if a final state is visited at the end of the
run, whereas a run of a B\"{u}chi automaton is accepted if a final
state is visited infinitely often. This small difference enables
B\"{u}chi automata to accept infinite sequences instead of finite
ones. However, the close relationship between finite state and
B\"{u}chi automata does not mean that automata manipulations of
B\"{u}chi automata are as simple as those for finite state automata
\cite{RS59}. In particular, B\"{u}chi automata are not closed under
determinization. For a non-deterministic B\"{u}chi automaton, there
might not exist a deterministic B\"{u}chi automaton that accepts the
same language as the non-deterministic one does. That is,
deterministic B\"{u}chi automata are strictly less expressive than
the nondeterministic ones; while for a finite state automaton, a
simple subset construction is sufficient for efficient
determinization \cite{RS59}. Determinization of B\"{u}chi automata
requires automata with more complicated acceptance mechanisms, such
as automata with Muller's subset condition \cite{Muller63,McNau66},
Rabin and Streett's accepting pair conditions \cite{Safra88,MS95},
or Parity acceptance condition \cite{Piterman07,Sven09}, and so
forth.

Besides the close relation with complementation, determinization is
also useful in solving games and synthesizing strategies. The first
determinization construction for B\"{u}chi automata was introduced
by McNaughton by converting nondeterministic B\"{u}chi automata to
deterministic Muller automata with a doubly exponential blow-up
\cite{MS95}. Safra achieved an asymptotically optimal
determinization algorithm using \emph{Safra trees} \cite{Safra88}.
His method extends the powerset construction by branching out a new
computation path each time the given automaton reaches a final
state. Thus the states of the resulting automaton are not a set of
states but a set of tree structures. Safra's construction transforms
a nondeterministic B\"{u}chi automaton with $n$ states into a
deterministic Rabin automaton with $12^nn^{2n}$ states and $n$ Rabin
pairs. Piterman \cite{Piterman07} presented a tighter construction
by utilizing \emph{compact Safra trees} which are obtained by using
a dynamic naming technique throughout the construction of Safra
trees. With compact Safra trees, a nondeterministic B\"{u}chi
automaton can be transformed into an equivalent deterministic parity
automaton with $2n^nn!$ states and $2n$ priorities (can be
equivalently transformed to a deterministic Rabin automaton with the
same complexity and $n$ priorities). The advantage of Piterman's
determinization is to output deterministic Parity automata which is
easier to manipulate.

By separating the principal acceptance mechanism from the concrete
acceptance condition of a B\"{u}chi automaton with $n$ states,
Schewe presented the construction of an equivalent deterministic
Rabin transition automaton with $o((1.65n)^n)$ states, which can be
simply translated to a standard Rabin automaton with $o((2.66n)^n)$
states \cite{Sven09}. Based on this construction, Schewe also obtained
an $O((n!)^2)$ transformation from B\"{u}chi automata to
deterministic parity automata that actually resembles Piterman's
construction (Liu and Wang \cite{LW09} independently present a
similar state complexity result to Piterman's determinization).
Schewe's construction is mainly based on a new data structure,
namely, \emph{history tree} which is an ordered tree with labels. Compared
to Safra's construction (whose complexity engendered a line of
expository work), Schewe's construction is simple and intuitive.
Subsequently, a lower bound for the transformation from
B\"{u}chi automata to deterministic Rabin transition automata is
proved to be $O((1.64n)^n)$ \cite{CZ09}. Therefore the state
complexity for Schewe's construction of Rabin transition automata is
optimal. For the ordinary Rabin acceptance condition, the
construction via \emph{history trees} also conducts the best upper-bound
complexity result $o((2.66n)^n)$. However, the price paid for that
is the use of $2^{n-1}$ Rabin pairs (instead of $n$ in Safra's
construction). Therefore, whether the index complexity of
Schewe's construction can be improved is an interesting problem.

With this motivation, we reconstruct \emph{history trees} as \emph{history
trees with canonical identifiers} by adding an extra unique identifier on each node of the
tree. To reduce the index complexity and keep the state complexity
meanwhile, it is required that whenever a node $\tau$ occurs in a
\emph{history tree}, its identifer must keep the same with the one used in other
\emph{history trees} where $\tau$ occurs. We show that it is possible to
keep each $\tau$, in the \emph{history trees} throughout the
determinization construction, annotated by the same identifer ranging
over $2^{\lceil (n-1)/2\rceil}$ different identifiers. As a consequence, improved
constructions of deterministic Rabin transition automata and
ordinary Rabin automata are obtained which have the same state
complexity as Schewe's construction but reduces the index complexity
to $2^{\lceil (n-1)/2\rceil}$. We also show that when keeping the state complexity, the index complexity of the proposed construction is tight already.

The rest of the paper is organized as follows. The next section
briefly introduces automata over infinite words. In Section
\ref{Sec:history}, Schewe's construction of deterministic Rabin
automata based on \emph{history trees} is presented. In section
\ref{sec:datastructure}, the main data structure \emph{history trees
with canonical identifiers} is presented. In the sequel, our improved
constructions of deterministic Rabin transition automata and
ordinary deterministic Rabin automata are presented
in Section \ref{Sec:compact} and \ref{Sec:ordinary}, respectively. %Finally, conclusions are drawn in Section \ref{Sec:con}.

\section{Automata}\label{Sec:automata}
Let $\Sigma$ denote a finite set of symbols called an alphabet.
 An infinite word $\alpha$ is an infinite
sequence of symbols from $\Sigma$. $\Sigma^\omega$ is the set of all
infinite words over $\Sigma$.  We present $\alpha$ as a function
$\alpha:\mathbb{N}\rightarrow \Sigma$, where $\mathbb{N}$ %\lo{I suggest replacing $N_{0}$ by the more standard $\mathbb{N}$}
is the set of non-negative integers. Thus, $\alpha(i)$ denotes the
letter appearing at the $i^{th}$ position of the word. In general,
$\mathsf{Inf}(\alpha)$ denotes the set of symbols from $\Sigma$
which occur infinitely often in $\alpha$. Formally,
$$\mathsf{Inf}(\alpha)=\{\sigma\in \Sigma\mid\exists^\omega n\in
\mathbb{N}:\alpha(n)=\sigma\}$$ Note that $\exists^\omega n \in
\mathbb{N}$ means there exist infinitely many $n$ in $\mathbb{N}$.

\begin{Def}[Automata] \rm
An automaton over $\Sigma$ is a tuple $A=(\Sigma, Q,\delta,
Q_0,\lambda)$, where $Q$ is a %\lo{comma removed}
non-empty, finite set of states, $Q_0\subseteq Q$ is a set of initial
states, ${\delta} \subseteq
Q\times \Sigma\times Q$ %\lo{Placing $\backslash$rightarrow in braces cause \LaTeX to treat $\backslash$subseteq as a binary operator, so it looks nicer.}
is a transition relation, and $\lambda$ an acceptance condition.
\hfill{$\blacksquare$}
\end{Def}

A run $\rho$ of an automaton $A$ on an infinite word $\alpha$ is an
infinite sequence $\rho:\mathbb{N}\rightarrow Q$ such that
$\rho(0)\in Q_0$ and for all $i\in \mathbb{N}$,
$(\rho(i),\alpha(i),\rho(i+1))\in {\delta}$.  $A$ is said
deterministic if $I$ is a singleton, and for any $(q,\sigma,q')\in$
$\delta$, there exists no $(q,\sigma,q'')\in \delta$ such that
$q''=q'$, and non-deterministic otherwise. Similar to infinite
words, $\mathsf{Inf}(\rho)$ denotes the set of states from $Q$ which
occur infinitely often in $\rho$. Formally,
$$\mathsf{Inf}(\rho)=\{q\mid\exists^\omega n\in
\mathbb{N}:\rho(n)=q\}$$

Several acceptance conditions are studied in literature. We
present three of them here:
\begin{itemize}
\item  B\"{u}chi, where $\lambda\subseteq Q$, and
$\rho$ is accepted iff $\mathsf{Inf}(\rho)\cap
\lambda\not=\emptyset$.

%\item  Generalized B\"{u}chi, where $\lambda=\{\lambda_1, \lambda_2, \ldots,
%\lambda_{m}\}$ with $\lambda_i\subseteq Q$ for each $1\leq i\leq m$,
%and $\rho$ is accepted iff $\mathsf{Inf}(\rho)\cap
%\lambda_i\not=\emptyset$ for each $1\leq i\leq m$.
%
%\item  Co-B\"{u}chi, where $\lambda\subseteq Q$, and
%$\rho$ is accepted iff $\mathsf{Inf}(\rho)\subseteq \lambda$.

\item Parity, where $\lambda=\{\lambda_1, \lambda_2, \ldots,
\lambda_{2k}\}$ with $\lambda_1\subset \lambda_2 \subset
\ldots\subset\lambda_{2k}=Q$, and $\rho$ is accepted if the minimal
index $i$ for which $\mathsf{Inf}(\rho)\cap \lambda_i\not=\emptyset$
is even.

\item Rabin, where $\lambda=\{(\lambda_1,\beta_1), (\lambda_2, \beta_2), \ldots,
(\lambda_{k},\beta_k)\}$ with $\lambda_i$, $\beta_i\subseteq Q$ and
$\rho$ is accepted iff for some $1\leq i\leq k$, we have that
$\mathsf{Inf}(\rho)\cap \lambda_i\not=\emptyset$ and
$\mathsf{Inf}(\rho)\cap \beta_i=\emptyset$.

%\item Streett, where $\lambda=\{(\lambda_1,\beta_1), (\lambda_2, \beta_2), \ldots,
%(\lambda_{k},\beta_k)\}$ with $\lambda_i$, $\beta_i\subseteq Q$ and
%$\rho$ is accepted iff for every $1\leq i\leq k$, we have that
%$\mathsf{Inf}(\rho)\cap \lambda_i=\emptyset$ or
%$\mathsf{Inf}(\rho)\cap \beta_i\not=\emptyset$.
%
%
%\item Muller, where $\lambda=\{\lambda_1, \lambda_2, \ldots,
%\lambda_{k}\}$ with $\lambda_i\subseteq Q$ and $\rho$ is accepted
%iff for some $1\leq i\leq k$, we have that $\mathsf{Inf}(\rho)=
%\lambda_i$.
\end{itemize}

An automaton accepts a word if it has an accepted run on it. The
accepted language of an automaton $A$, denoted by $L(A)$, is the set
of words that $A$ accepts.

We denote the different types of automata by three letter acronyms
in $\{D,N\}\times \{B,P,R\}\times \{W\}$. The first letter stands
for the branching mode of the automaton (deterministic or
nondeterministic); the second letter stands for the acceptance
condition type (B\"{u}chi, parity, or Rabin); and the third letter
indicates that the automaton runs on words. While acceptance
condition of an ordinary automaton is defined on states, the
acceptance condition of a transition automaton is defined on
transitions of the automaton. Accordingly, with respect to each type
of ordinary automata, we also have its transition version.

%in $\{D,N\}\times \{B,P,R\}\times \{T\}\times\{W\}$ where $T$ is the
%first letter of ``transition".

\section{Determinization via History Trees}\label{Sec:history}
This section presents Schewe's determinization via \emph{history trees}
\cite{Sven09}.

\subsection{History Trees}
A \emph{history tree} is a simplification of a \emph{Safra tree} \cite{Safra88} by
the omission of explicit names for nodes. It is actually an \emph{ordered
tree} with labels.

Let $\mathbb{N}_{{>}0}$ be the set of positive integers. An
\emph{ordered tree} $T \subseteq \mathbb{N}_{{>}0}^*$ is a finite,
prefix-closed and order-closed (with respect to siblings) subset of
finite sequences of positive integers. That is, if a sequence
$\tau=t_0t_1\ldots t_n$ is in $T$, then all proper prefixes of
$\tau$, $\tau'=t_0t_1\ldots t_m $, $m<n$, concatenated with $s$ where $s =
\epsilon$ or $s \in \{1, \cdots, t_{m+1}-1\}$ are also in $T$.  For a
node $\tau$ of an ordered tree $T$, we call the number of children
of $\tau$ its \emph{degree}, denoted by $deg_T(\tau)=|\{i\in
\mathbb{N}_{{>}0}\mid \tau\cdot i \in T\}|$.

Notice that a tree is not order-closed if, and only if, there exists
at least one node $x_1\ldots x_{n-1}x_n$ in the tree such that $x_n\geq 2$ and $x_1\ldots x_{n-1}(x_n-1)$ is not
in the tree. We call such a node an \emph{imbalanced node}. Further,
the greatest order-closed subset of a tree is called the set of
\emph{stable nodes}; and the other nodes are called \emph{unstable
nodes}. Intuitively, if $x_1\ldots
x_{n-1}x_n$ where $n\geq 1$ is an imbalanced node, %\lochanged{
all nodes $x_1\ldots x_{n-1}x$ where $x \geq x_n$, and their descendants, are unstable nodes.

\begin{Def} [History Trees \cite{Sven09}]\rm A history tree  for a given
nondeterministic B\"{u}chi automaton $A=(\Sigma,Q,Q_0,\delta,$ $\lambda)$
is a labelled tree $\anglebra{T,l}$, where $T$ is an ordered tree,
and $l:T\rightarrow 2^Q\setminus \{\emptyset\}$ is a labeling
function that maps the nodes of $T$ to non-empty subsets of $Q$,
such that the following three properties hold:
\begin{itemize}
\item P1: the label of each node is non-empty,
\item P2: the labels of different children of a node are
disjoint, and
\item P3: the label of each node is a proper superset of
the union of the labels of its children. \hfill{$\blacksquare$}
\end{itemize}
\end{Def}

Notice that in a \emph{history tree}, all the nodes are not explicitly
named, but an implicit name {(derived from the underlying \emph{ordered
tree})} is used. Fig.~\ref{fig:ht1}
shows an example of \emph{history tree} where the implicit names of nodes
are written in red. $a$, $b$, $c$, $d$, $e$, $f$, and $g$, are
states of the relative NBW. Note that in the rest part of the paper,
we often omit implicit names of nodes for simplicity.
\begin{figure}[htp]
\centerline{\psfig{figure=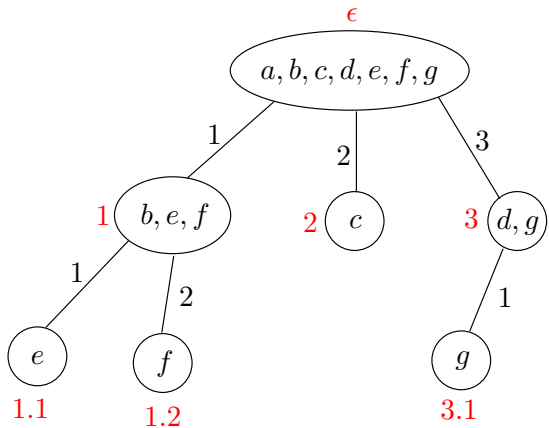,height=5.6cm}}\caption{A history
tree} \label{fig:ht1}
\end{figure}

\begin{Fact} \label{fact:nodes}\rm The number of the nodes in each \emph{history tree} of a
non-deterministic B\"{u}chi automaton with $n$ states is no more
than $n$.
\end{Fact}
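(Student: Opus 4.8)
The plan is to construct an explicit injection from the node set of the \emph{history tree} into $Q$; since $|Q|=n$, this bounds the number of nodes by $n$ at once. The construction exploits P3, which guarantees that every node owns at least one ``private'' state that is not inherited by any of its children. Concretely, for each node $\tau\in T$ I would consider the set $P(\tau)=l(\tau)\setminus\bigcup_i l(\tau\cdot i)$, where the union ranges over the children of $\tau$. By P3 the label $l(\tau)$ is a \emph{proper} superset of the union of its children's labels, so $P(\tau)\neq\emptyset$; I fix an arbitrary representative $q_\tau\in P(\tau)$, obtaining a map $\tau\mapsto q_\tau$ from the nodes of $T$ to $Q$.

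The second step is a monotonicity lemma about labels along a branch: if $\sigma$ is a proper ancestor of $\rho$ and $c$ is the child of $\sigma$ lying on the path from $\sigma$ to $\rho$, then $l(\rho)\subseteq l(c)$. This follows by a short induction on the length of the path, using only the weak form of P3, namely that the label of a node contains the union of the labels of its children (so $l(\rho)$ is contained in the label of its parent, then its grandparent, and so on up to $c$).

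The final step is to verify injectivity of $\tau\mapsto q_\tau$ by fixing two distinct nodes $\tau,\tau'$ and splitting into two cases. If one is an ancestor of the other, say $\tau'$ is a proper descendant of $\tau$, let $c$ be the child of $\tau$ on the path to $\tau'$; by the monotonicity lemma $q_{\tau'}\in l(\tau')\subseteq l(c)\subseteq\bigcup_i l(\tau\cdot i)$, whereas $q_\tau\notin\bigcup_i l(\tau\cdot i)$ by construction, so $q_\tau\neq q_{\tau'}$. If neither is an ancestor of the other, let $\sigma$ be their lowest common ancestor and let $c_1\neq c_2$ be the children of $\sigma$ lying on the paths to $\tau$ and to $\tau'$, respectively; then $q_\tau\in l(\tau)\subseteq l(c_1)$ and $q_{\tau'}\in l(\tau')\subseteq l(c_2)$ by monotonicity, while $l(c_1)\cap l(c_2)=\emptyset$ by P2, so again $q_\tau\neq q_{\tau'}$. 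Hence the map is injective and the number of nodes is at most $n$.

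I expect the only delicate point to be organizational rather than mathematical: the monotonicity lemma must cover the degenerate subcases where $\rho$ equals $c$, and where $\tau$ (or $\tau'$) is itself the relevant child $c_i$ of the lowest common ancestor, so that the containments $l(\tau)\subseteq l(c_1)$ and $l(\tau')\subseteq l(c_2)$ still hold by reflexivity. Once these boundary cases are absorbed, the result is a direct consequence of P2 and P3, with P1 ensuring that each $q_\tau$ is well defined.
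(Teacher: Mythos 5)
Your proof is correct and takes essentially the same approach as the paper's: both assign to each node a ``private'' state (an element of $l(\tau)$ outside the union of its children's labels, nonempty by P3) and use P2 to conclude the assignment is injective into $Q$, giving the bound $n$. The paper merely asserts this injectivity in two sentences, whereas you verify it properly via the label-monotonicity lemma and the lowest-common-ancestor case split --- a more rigorous rendering of the same idea, not a different route.
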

\begin{proof}
By the definition of \emph{history trees}, the labels of siblings are
disjoint (P2) and the label of each node contains at leat a state
not appearing in any labels of its children (P3). That is any node
$v$ in the \emph{history tree} must have at least one state (in the
B\"{u}chi automaton) which is specific to $v$ other than any nodes
in the \emph{history tree}. Consequently, the \emph{history tree} can contain at
most $n$ different nodes.
\end{proof}

\paragraph*{\bf Minor modification of history trees} \emph{History
trees} presented in this paper are different from the ones given in
\cite{Sven09} where each node in a \emph{history tree} is implicitly named
as a sequence of \emph{non-negative} integers.
%, i.e.~each integer in the sequence is greater or equal to $0$ while, in this paper, a node is named by a sequence of integers greater or equal to $1$.
This small modification will not change the results presented in
\cite{Sven09} but useful in reducing  index complexity.

\subsection{Construction of History Trees}
For a nondeterministic B\"{u}chi automaton
$A=(\Sigma,Q,Q_0,$ $\delta,\lambda)$, the initial \emph{history tree} is
$\anglebra{T_0,l_0}=\anglebra{\{\epsilon\}, l_0(\epsilon)=Q_0}$ that
contains only one node $\epsilon$ that is labeled with the set of
initial states $Q_0$. For a \emph{history tree} $\anglebra{T,l}$, and an
input letter $\sigma\in \Sigma$ of $A$, a new \emph{history tree}, the
$\sigma$-successor $\anglebra{\widehat{T},\widehat{l}}$ of
$\anglebra{T,l}$, can be constructed in four steps:

\paragraph*{\scshape Step 1} We first construct a labelled tree $\anglebra{T',l':T'\rightarrow 2^Q}$ such
that
\begin{itemize}
\item $T' = T \cup \{ \tau' \mid
\tau'=\tau\cdot (deg_T(\tau)+1) \mbox{ for any } \tau \in T \}$.
That is, $T'$ is formed by taken all nodes from $T$, and new nodes
which are all node $\tau \in T$ appending with $deg_T(\tau)+1$.

\item the label $l'(\tau')$ of an old node $\tau'\in
T$ is the set $\delta(l(\tau'),\sigma)=\bigcup_{q\in
l(\tau')}\delta(q,\sigma)$ of $\sigma$-successors of the states in
the label of $\tau'$, and

\item the label $l'(\tau\cdot(deg_T(\tau)+1))$ of a new node $\tau\cdot (deg_T(\tau)+1)$ is the set of
final states in $\sigma$-successors of the states in the label of $\tau$, i.e.~$\delta(l(\tau),\sigma)\cap
\lambda$.
\end{itemize}

\paragraph*{\scshape Step 2} In this step, {\scshape P2} is re-established. We construct the
tree $\anglebra{T',l'': T'\rightarrow 2^Q}$, where $l''$ is inferred
from $l'$ by removing all states in the label of a node
$\tau'=\tau\cdot i$ and all its descendants if it appears in the
label $l'(\tau\cdot j)$ of an older sibling $(j<i)$.

\paragraph*{\scshape Step 3} {\scshape P1} and {\scshape P3} are re-established in
the third step. We construct the tree $\anglebra{T'', l'':
T''\rightarrow 2^Q}$ by (a) removing all descendants of nodes $\tau$
such that the label of $\tau$ is not a proper superset of the union
of the labels of the children of $\tau$; and (b) removing all nodes
$\tau$ with an empty label $l''(\tau)=\emptyset$. The stable nodes
whose descendant have been deleted due to rule (a) are
\emph{accepting} nodes.

\paragraph*{\scshape Step 4} The tree resulting from step 3 satisfies {\scshape P1} - {\scshape P3},
but it might be not order-closed, i.e. there exist unstable nodes in
the tree. We construct the $\sigma$-successor
$\anglebra{\widehat{T},\widehat{l}: T\rightarrow 2^Q\setminus
\{\emptyset\}}$ of $\anglebra{T,l}$ by ``compressing" $T''$ to an
order-closed tree, using the compression function $comp:T''
\rightarrow \omega^*$ that maps the empty word $\epsilon$ to
$\epsilon$, and $\tau\cdot i$ to $comp(\tau)\cdot j$, where
$j=|\{k<i\mid \tau\cdot k\in T''\}|+1$ is the number of older
siblings of $\tau\cdot i$ plus $1$\footnote{In \cite{Sven09},
$j=|\{k<i\mid \tau\cdot k\in T''\}|$ is defined to be the number of
older siblings of $\tau\cdot i$ since $0$ may occur in the name of
nodes.}. Note that the nodes that renamed during this step are
exactly those which are \emph{unstable}.

\subsection{Acceptance on Transitions}
Based on the above constructing procedure, for a given NBW, an
equivalent \emph{Deterministic Rabin Transition Automaton} (DRTW)
can be inductively established by:

\begin{itemize}
\item[(1)] Build the initial \emph{history tree};
\item[(2)] for each \emph{history tree}, by the four steps presented before, construct its $\sigma$-successor
history $T_{i+1}$ for each $\sigma\in \Sigma$;
\item[(3)] performed (2) repeatedly until  no new \emph{history trees} can be
created.
\end{itemize}
Now upon this underlying automata structure, deterministic Rabin
transition automata are defined.

\begin{Def} \rm The deterministic Rabin transition automaton that equivalents to a given NBW $A$ is $RT=(\Sigma,Q_{RT},Q_{RT0},\delta_{RT},\lambda_{RT})$, where alphabet $\Sigma$ is
the same as the one of $A$; $Q_{RT}$ is the set of \emph{history trees}
w.r.t $A$; $Q_{RT0}$ is the initial \emph{history tree}; $\delta_{RT}$ is the history
transition relation; and $\lambda_{RT}=((A_{\tau_1}, R_{\tau_1}), \ldots,
(A_{\tau_k}, R_{\tau_k}))$, $k\geq 1$, is the acceptance condition.
\hfill{$\blacksquare$}
\end{Def}

In each Rabin pair $(A_\tau, R_\tau)$, $\tau$ ranges over implicit names of nodes (in
a \emph{history tree}). $A_\tau$ is the set of transitions where $\tau$ is
accepting, and $R_\tau$ the set of  transitions in which $\tau$ is
unstable. For an input word $\alpha:\omega\rightarrow \Sigma$ we
call the sequence $\Pi=\anglebra{T_0,l_0}, \anglebra{T_1,l_1},
\ldots $ of \emph{history trees} that start with the initial \emph{history tree}
$\anglebra{T_0,l_0}$ and where, for every $i\in \omega$,
$\anglebra{T_i,l_i}$ is followed by $\alpha(i)$-successor
$\anglebra{T_{i+1},l_{i+1}}$, the history trace of $\alpha$. Let
$\Pi$ be the history trace of a word $\alpha$ on the deterministic
Rabin transition automaton. $\alpha$ is accepted by the automaton
if, and only if, $\mathsf{Inf}(\Pi)\cap A_{\tau_i}\not=\emptyset$
and $\mathsf{Inf}(\Pi)\cap R_{\tau_i}=\emptyset$ for some
$(A_{\tau_i},R_{\tau_i}) \in \lambda_{RT}$, $1\leq i\leq k$.

With these notations, the following useful results are proved in
\cite{Sven09}.
\begin{Lem}\label{sven1}
An $\omega$-word $\alpha$ is accepted by a nondeterministic
B\"{u}chi automaton $A$ if, and only if, there is a node $\tau\in
\mathbb{N}_{>0}^*$ such that $\tau$ is eventually always stable and
always eventually accepting in the history trace of $\alpha$.
\hfill{$\blacksquare$}
\end{Lem}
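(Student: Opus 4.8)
The plan is to prove the two directions separately, relying throughout on two structural invariants of the construction that I would establish first. The first invariant records how labels are transported along the trace: if a node $\tau$ survives (is present and not renamed) in the step from $\langle T_i,l_i\rangle$ to $\langle T_{i+1},l_{i+1}\rangle$, then every state in $l_{i+1}(\tau)$ is an $\alpha(i)$-successor of some state in $l_i(\tau)$; this is immediate from {\scshape Step 1} together with the fact that {\scshape Step 2} and {\scshape Step 3} only delete states. The second invariant is the breakpoint semantics of the accepting flag: when $\tau$ is flagged accepting at a transition (its descendants are pruned in {\scshape Step 3}(a) because $l(\tau)$ ceased to be a proper superset of the union of its children's labels), every state currently in $l(\tau)$ is reachable, along the corresponding segment of $\alpha$, from a state that lay in $l(\tau)$ at the previous accepting transition of $\tau$ (or at its creation) by a finite run of $A$ that visits at least one final state. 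This holds because the child spawned at each step in {\scshape Step 1} is seeded exactly by the final $\alpha$-successors, every descendant label thereafter evolves only by taking successors, and the covering of $l(\tau)$ by its descendants therefore certifies that the whole label has passed through such a freshly seeded child since the last breakpoint.

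For the ``if'' direction, suppose $\tau$ is eventually always stable and always eventually accepting, and let $N \le i_1 < i_2 < \cdots$ be the accepting transitions of $\tau$ occurring after $\tau$ has become permanently stable. Writing $L_k = l_{i_k}(\tau)$, property {\scshape P1} gives $L_k \neq \emptyset$ for every $k$, and the breakpoint invariant yields, for every $q \in L_{k+1}$, a predecessor $q' \in L_k$ together with a finite run segment from $q'$ to $q$ over the block $\alpha(i_k)\cdots\alpha(i_{k+1}-1)$ that visits a final state. Threading these predecessor links backwards produces a finitely branching infinite tree whose $k$-th level is $L_k$; by K\"onig's lemma it has an infinite branch, and concatenating the associated run segments gives a single run of $A$ on $\alpha$ that visits a final state in every block, hence infinitely often. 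Thus $\alpha \in L(A)$.

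For the ``only if'' direction, fix an accepting run $\rho$ of $A$ on $\alpha$ and, for each $i$, let $v_i$ be the deepest node of $T_i$ whose label contains $\rho(i)$; this is well defined because the root always carries the full reachable set and hence $\rho(i)\in l_i(\epsilon)$. By Fact \ref{fact:nodes} every history tree has at most $n$ nodes, so only finitely many node names ever occur, and the tree order on them (comparing names as sequences) is a finite total order. I would let $\tau$ be the least node that occurs as $v_i$ for infinitely many $i$. The two things to verify are that $\tau$ is eventually always stable and always eventually accepting. Stability follows by minimality: an instability of $\tau$, i.e.\ a renaming in {\scshape Step 4} triggered by an older sibling or an ancestor shrinking, would shift the home of $\rho$ to a strictly smaller node, and if this happened infinitely often it would contradict the choice of $\tau$ as the least recurring home. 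For the accepting condition I would use that $\rho$ visits final states infinitely often: each such visit while the home is at or below $\tau$ seeds a child of $\tau$ that captures $\rho$, and since $\tau$ is the least recurring home the run cannot remain trapped inside a child of $\tau$ forever, so $l(\tau)$ must be recaptured by its children infinitely often, which is exactly the accepting flag being raised infinitely often.

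I anticipate the ``only if'' direction to be the main obstacle, and within it the delicate point is the precise ordering argument linking the movement of the home node $v_i$ to the compression and renaming of {\scshape Step 4}: one must argue carefully that the home can change only by a final-state-induced descent into a child, a genuine instability that moves it left, or a deletion that moves it up to an ancestor, and that the minimality of $\tau$ rules out an infinite leftward or downward drift. Making this casework airtight---in particular correlating the abstract \emph{stable}/\emph{unstable} classification of nodes with the concrete renaming performed by the compression function $comp$---is where the real work lies; the ``if'' direction, by contrast, is an essentially routine K\"onig's lemma argument once the breakpoint invariant is in place.
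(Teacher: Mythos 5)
The paper never proves this lemma itself: it is quoted verbatim from Schewe \cite{Sven09} (``the following useful results are proved in [Sven09]''), and the paper only \emph{uses} it as a black box in Theorem \ref{thm:transition}. So the comparison here is against Schewe's original argument, and your plan is essentially a correct reconstruction of it: the ``if'' direction via the breakpoint invariant plus K\"onig's lemma is exactly the standard argument (your only omission is the routine remark that states in labels are reachable from $Q_0$ over the prefix read so far, needed to prepend an initial run segment), and the ``only if'' direction via tracking the host of an accepting run is also the standard route. Two cautions on the ``only if'' side, which you rightly flag as the hard part. First, your stability argument as stated (``an instability of $\tau$ would shift the home to a smaller node'') is only valid at moments when the home actually sits at or below $\tau$; the clean way to make it airtight is level by level rather than by a single global minimality appeal: once the home stops visiting depth $<d$, its depth-$d$ ancestor's name can only move left (Step 2 always keeps the oldest sibling, compression only decreases indices), hence is eventually constant and thereafter never renamed; recursion on $d$, bounded by Fact \ref{fact:nodes}, terminates at a node that is accepting infinitely often because, along the run, an upward move of the home to a node $v$ can only be caused by the Step-3 pruning in which $v$ itself is the accepting node (empty-label deletion is impossible on the branch carrying $\rho$). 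Second, note that ``stable'' must be read as ``present in the order-closed part of the intermediate tree'': a node can be \emph{deleted} without being renamed and later re-created under the same name, and your casework (descent / left move / up move) must treat deletion of $\tau$ itself as a failure of stability, otherwise the breakpoint invariant in the ``if'' direction would not thread through. With those two points made explicit, your outline matches Schewe's proof; it is not a genuinely different route.
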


Let $RT$ be the deterministic Rabin transition automaton obtained
from the given non-deterministic B\"{u}chi automaton $A$. The
following corollary can easily be derived from Lemma \ref{sven1}.
\begin{Cor}
$L(RT)=L(A)$. \hfill{$\blacksquare$}
\end{Cor}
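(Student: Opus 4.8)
The plan is to derive this corollary directly from Lemma~\ref{sven1}, which has already characterized acceptance by the nondeterministic B\"{u}chi automaton $A$ in terms of the history trace, together with the definition of the Rabin acceptance condition $\lambda_{RT}$ on the transition automaton $RT$. The key observation is that the history trace $\Pi$ of a word $\alpha$ on $RT$ is exactly the sequence of \emph{history trees} visited along the unique run of $RT$ on $\alpha$ (unique because $RT$ is deterministic), so reasoning about the run of $RT$ and reasoning about the history trace are interchangeable.

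First I would fix an arbitrary $\omega$-word $\alpha$ and unfold both sides of the claimed equality. By Lemma~\ref{sven1}, $\alpha \in L(A)$ holds if, and only if, there exists a node $\tau \in \mathbb{N}_{>0}^*$ that is \emph{eventually always stable} and \emph{always eventually accepting} in the history trace $\Pi$ of $\alpha$. On the other side, $\alpha \in L(RT)$ holds if, and only if, for some Rabin pair $(A_{\tau_i}, R_{\tau_i}) \in \lambda_{RT}$ we have $\mathsf{Inf}(\Pi) \cap A_{\tau_i} \neq \emptyset$ and $\mathsf{Inf}(\Pi) \cap R_{\tau_i} = \emptyset$. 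The whole proof reduces to showing these two conditions coincide, which I would do by translating the temporal phrases ``eventually always stable'' and ``always eventually accepting'' into the set-theoretic $\mathsf{Inf}$ conditions, using the very definitions of $A_\tau$ (the transitions where $\tau$ is accepting) and $R_\tau$ (the transitions where $\tau$ is unstable).

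The forward direction would proceed as follows: suppose $\tau$ witnesses acceptance in Lemma~\ref{sven1}. ``$\tau$ is eventually always stable'' means that after some finite position $\tau$ is never unstable again, i.e.\ only finitely many transitions along $\Pi$ belong to $R_\tau$, so $\mathsf{Inf}(\Pi) \cap R_\tau = \emptyset$. ``$\tau$ is always eventually accepting'' means $\tau$ is accepting infinitely often, i.e.\ infinitely many transitions belong to $A_\tau$, giving $\mathsf{Inf}(\Pi) \cap A_\tau \neq \emptyset$. Thus the Rabin pair indexed by $\tau$ is satisfied and $\alpha \in L(RT)$. For the converse, I would run the same equivalences backwards: a satisfied Rabin pair $(A_\tau, R_\tau)$ yields a node $\tau$ hit by $A_\tau$ infinitely often but by $R_\tau$ only finitely often, which is precisely the ``always eventually accepting'' and ``eventually always stable'' conjunction required by Lemma~\ref{sven1}, whence $\alpha \in L(A)$.

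The main subtlety, rather than a genuine obstacle, is bookkeeping about which node the Rabin pair refers to. The acceptance condition $\lambda_{RT}$ ranges $\tau$ over \emph{implicit names} of nodes, and under the compression in {\scshape Step 4} the ``same'' implicit name may be reused across different trees; I would need to check that the temporal notions in Lemma~\ref{sven1} are stated for exactly these implicit names, so that the node $\tau$ produced by the lemma indexes a genuine Rabin pair in $\lambda_{RT}$ and vice versa. Once this indexing is confirmed to match, the corollary follows immediately since $\alpha$ was arbitrary, establishing $L(RT) = L(A)$.
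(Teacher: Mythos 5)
Your proof is correct and takes essentially the same route as the paper: the paper gives no explicit proof here, stating only that the corollary ``can easily be derived from Lemma~\ref{sven1}'', and your argument---unfolding the Rabin pairs $(A_\tau,R_\tau)$ so that $\mathsf{Inf}(\Pi)\cap R_\tau=\emptyset$ corresponds to ``eventually always stable'' and $\mathsf{Inf}(\Pi)\cap A_\tau\not=\emptyset$ to ``always eventually accepting''---is precisely that derivation, and indeed mirrors the two-direction proof the paper later writes out for the analogous Theorem~\ref{thm:transition}. Your care about the indexing of pairs by implicit node names is also consistent with how the paper defines $\lambda_{RT}$, so nothing is missing.
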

%With these notations, we state the following useful results due to
%Schewe \cite{Sven09}:

Eventually, the following main result for the determinization of
NBWs in Rabin transition acceptance condition is claimed
\cite{Sven09}.

%\begin{Fact}\label{Lem:allnodes}
%Totally $2^{n-1}$ different nodes (identified by the implicit names
%in the underlying ordered trees) are involved in the set of history
%trees containing at most $n$ nodes.% proof idea: choose child or sibling
%\end{Fact}
%\begin{proof}
%Initially, the root node $\epsilon$ must be contained in any ordered
%tree with $n$ nodes.
%
%For the rest $n-1$ nodes in the order tree, it should be the
%left-most younger sibling or the right-most child node of one node
%in the tree. Accordingly, there are at most $2^{n-1}$ nodes that are
%possible to occur in an ordered tree with $n$ nodes.
%\end{proof}

\begin{Thm}\label{sven2}
For a given nondeterministic B\"{u}chi automaton with $n$ states, we
can construct a deterministic Rabin transition automaton with
$o((1.65n)^n)$ states and $2^{n-1}$ accepting pairs that recognizes
the language of $A$.    \hfill{$\blacksquare$}
\end{Thm}

The state complexity $o((1.65n)^n)$ is due to the result:
$$hist(n)\subset o((1.65n)^n)$$
where $hist(n)$ is the number of \emph{history trees} for a B\"{u}chi
automaton with $n$ states; the index complexity is based on the
number of (identifiable) nodes in the \emph{history trees} \cite{Sven09}.
Note that even though there are at most $n$ nodes in each \emph{history
tree},  there are a total of $2^{n-1}$ different nodes (identified by
name) which may be involved in the history transitions of the
B\"{u}chi automaton with $n$ states\footnote{The essential principle will be formally analyzed in Fact \ref{fact:full}.}.

Due to the lower-bound result in \cite{CZ09}, the state complexity
for the construction of deterministic Rabin transition automaton is
tight. However, whether the exponential index complexity can be
improved is interesting.

%\subsection{Existing Results about Ordinary Rabin Acceptance Condition} By postpone the transition by one step, the deterministic Rabin transition
%automata can easily be transformed into ordinary deterministic Rabin
%automata. Thus the following results are obtained in \cite{Sven09}.
%
%\begin{Thm}
%For a given nondeterministic B\"{u}chi automaton $A$ with $n$
%states, we can construct a deterministic Rabin automaton with
%$|\Sigma|\times o((1.65n)^n)$ states and $2^{n-1}$ accepting pairs
%that recognizes the language $L(A)$ of $A$. \hfill{$\blacksquare$}
%\end{Thm}
%In practise, compared with $o((1.65n)^n)$, the alphabets of automata
%are usually small. Therefore, we have the corollary below.
%
%
%\begin{Cor}
%For a given nondeterministic B\"{u}chi automaton $A$ with $n$ states
%over a small alphabet $\Sigma$, we can construct a deterministic
%Rabin automaton with $ o((1.65n)^n)$ states and $2^{n-1}$ accepting
%pairs that recognizes the language $L(A)$ of $A$.
%\hfill{$\blacksquare$}
%\end{Cor}

In \cite{Sven09}, by enriching \emph{history trees} with information about
which node of the resulting tree was accepting or unstable in the
third step of the transition, ordinary deterministic Rabin automata
can also be achieved.

\begin{Thm}\label{Thm:sven}
For a given nondeterministic B\"{u}chi automaton with $n$ states, we
can construct a deterministic Rabin automaton with $o((2.66n)^n)$
states and $2^{n-1}$ accepting pairs that recognizes the language
$L(A)$ of $A$.    \hfill{$\blacksquare$}
\end{Thm}

Further, by introducing the \emph{later introduction record} as a
record tailored for ordered trees, deterministic automata with
Parity acceptance condition is constructed \cite{Sven09} which
exactly resembles Piterman's determinization with Parity acceptance
condition \cite{Piterman07}.

\begin{Thm}
For a given nondeterministic B\"{u}chi automaton with $n$ states, we
can construct a deterministic parity automaton with $O(n!^2)$ states
and $2n$ priorities that recognizes the language $L(A)$ of $A$.
\hfill{$\blacksquare$}
\end{Thm}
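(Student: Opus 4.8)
The plan is to build the parity automaton on top of the same history-tree structure that underlies the Rabin transition automaton of Theorem~\ref{sven2}, but to replace the unordered family of $2^{n-1}$ Rabin pairs by a single linearly ordered acceptance index. The device that makes this possible is the \emph{later introduction record} (LIR): alongside each history tree $\anglebra{T,l}$ I would maintain a linear order $\prec$ on the nodes of $T$ recording the order in which the surviving nodes were introduced into the tree, oldest first. Intuitively, $\prec$ replaces the implicit names (which range over exponentially many possibilities and forced the $2^{n-1}$ pairs) by a seniority ranking that takes at most $n!$ values, and it is this bounded, totally ordered datum that supplies the parity priorities.

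First I would specify how the LIR is transported through the four-step $\sigma$-successor construction. A node that already existed keeps its seniority relative to the other survivors; the new nodes created in Step~1 (the children $\tau\cdot(deg_T(\tau)+1)$) are appended as the most junior entries; nodes deleted in Step~3 are simply dropped from $\prec$; and the renaming performed by the compression function $comp$ in Step~4 changes only implicit names, not introduction order, so $\prec$ is untouched by it. This is precisely the point of indexing by introduction time rather than by position in the ordered tree. With $\prec$ in hand I would assign to each transition one priority in $\{1,\dots,2n\}$: among the nodes flagged \emph{accepting} or \emph{unstable} on that transition take the $\prec$-most senior one $\tau$, and give the transition priority $2\cdot\mathrm{rank}_\prec(\tau)$ if $\tau$ is accepting and $2\cdot\mathrm{rank}_\prec(\tau)-1$ if $\tau$ is unstable. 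Because the senior-most flagged node always realizes the smallest of these values and there are at most $n$ nodes (Fact~\ref{fact:nodes}), the priorities lie in $\{1,\dots,2n\}$, giving index complexity $2n$.

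Correctness would then reduce to Lemma~\ref{sven1}. For the forward direction, if $\alpha$ is accepted then some node is eventually always stable and always eventually accepting; I would take the $\prec$-most senior such node $\tau^*$. After $\tau^*$ has become permanently stable it is never again unstable and is infinitely often accepting, so the even priority $2\cdot\mathrm{rank}_\prec(\tau^*)$ recurs; since $\tau^*$ is the most senior node with this behaviour, no strictly more senior node contributes a smaller priority infinitely often, so the minimal recurring priority is this even value and the run is accepting. Conversely, if the minimal recurring priority is even, say $2r$, then a forever-surviving node of eventual rank $r$ is eventually stable and infinitely often accepting, which by Lemma~\ref{sven1} forces $\alpha\in L(A)$. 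For the state count I would bound the number of history trees augmented with an LIR: the labelled ordered tree contributes the usual factor and a compatible seniority order contributes at most a further $n!$; a careful count of such (tree, LIR) pairs yields $O(n!^2)$, matching Piterman's bound.

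The main obstacle is the interface between the dynamically changing ranks and the minimum-recurring-priority nature of the parity condition. One must show that seniority genuinely stabilizes on the surviving nodes: since every newly created node is appended as the most junior entry, no node introduced after $\tau$ can ever be more senior than $\tau$, so the set of nodes $\prec$-above a forever-surviving node can only shrink (as they die) and hence $\mathrm{rank}_\prec(\tau)$ is eventually constant; this is what makes the priority on $\tau$'s accepting transitions eventually fixed and lets the earlier fluctuating ranks contribute only finitely often. One must also verify that the tie-breaking between the accepting ($2r$) and unstable ($2r-1$) events of competing nodes never lets a spurious odd priority from a senior-but-only-finitely-often-unstable node mask, or a genuinely-infinitely-often-unstable senior node spuriously be masked by, the true even witness. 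Establishing this monotonicity invariant precisely, so that the $\prec$-most senior eventually-stable-and-accepting node is exactly the one determining the minimal recurring priority, is the delicate step; the transport rules for $\prec$ in Step~4 and the $O(n!^2)$ counting are comparatively routine once that invariant is in place.
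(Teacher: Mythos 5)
The paper never actually proves this theorem: it is stated as background, quoted from Schewe \cite{Sven09} (hence the closing box), and the only hint the paper gives is that the construction uses the \emph{later introduction record}. Your proposal reconstructs precisely that construction --- history trees augmented with a seniority order $\prec$, one priority in $\{1,\dots,2n\}$ per transition determined by the $\prec$-most senior node that is accepting or unstable, and correctness reduced to Lemma~\ref{sven1} via the observation that the rank of a forever-surviving node is non-increasing and hence eventually constant. That acceptance argument is sound and, as far as one can tell from the paper's one-line description of \cite{Sven09}, it is the intended one.

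The genuine gap is the state count, which is the entire content of the $O((n!)^2)$ claim. Your sketch bounds the number of states by (number of labelled ordered trees) $\times$ (number of compatible seniority orders), i.e.\ by $o((1.65n)^n)\cdot n!$. But this is exponentially \emph{larger} than $O((n!)^2)$: by Stirling, $n!\sim(n/e)^n\sqrt{2\pi n}$, so $(n!)^2/\bigl((1.65n)^n\cdot n!\bigr)=n!/(1.65n)^n\to 0$, and the product bound therefore proves only a strictly weaker theorem. The count must be done jointly, not as a product of separate counts: a later introduction record essentially determines the underlying ordered tree, because the $i$-th introduced node always enters as the youngest child of one of the $i-1$ nodes already present, so there are at most $(n'-1)!$ tree-plus-record structures with $n'$ nodes; and by {\scshape P2}/{\scshape P3} a labeling is the same thing as a map $Q\to T\cup\{\bot\}$ whose image contains every node of $T$, which for $n'=n$ leaves exactly the $n!$ bijections. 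Summing over $n'\le n$ yields the $O((n!)^2)$ bound; without some such joint encoding your construction only achieves $o((1.65n)^n\, n!)$ states. A smaller issue: you assign priorities to transitions, whereas the theorem (like Piterman's result it resembles) concerns an ordinary parity automaton; folding the last-transition event into the state --- as the paper itself does with $m_g,m_b$ in Section~\ref{sec:parity} --- is routine but multiplies the state count by a polynomial factor that your analysis must also absorb.
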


Note that the original state complexity of Piterman's construction
is $2n^n n!$. By giving a better analysis, Liu and Wang,
independently, achieved a similar result ($2n(n!)^2$ states and $2n$
priorities) for Piterman's construction \cite{LW09}.
Further, it also have the transformation from NBW
to DRW with $O(n!^2)$ states and $n$ priorities.
However, in state
of the art for the upper bound complexity of the transformation from
NBW to DRW, it is hard to say which one in ($o((2.66n)^n)$,
$2^{n-1}$) and ($O(n!^2)$, $n$) is better. Note that the first
element in the pair denotes the state complexity while the second
one in the pair indicates the index complexity.

\section{Ordered Trees with Identifiers}\label{sec:datastructure}

This section presents \emph{ordered trees with identifiers} that are
\emph{ordered trees} with each node annotated by a unique identifier.

\subsection{Height of Nodes in Ordered Trees} \label{Sec:semi}
Given an ordered tree $T$, the height of a node $\tau=t_1t_2\ldots
t_n\in T$ is $\mathit{height}(\tau) :=
\sum\limits_{i=1}^{len(\tau)}t_i$. For instance, the height of each node in the ordered tree in Fig.\ref{fig:sem} (1) is written inside the nodes as depicted in Fig.\ref{fig:sem}
(2).
\begin{figure}[htp]
\centerline{\psfig{figure=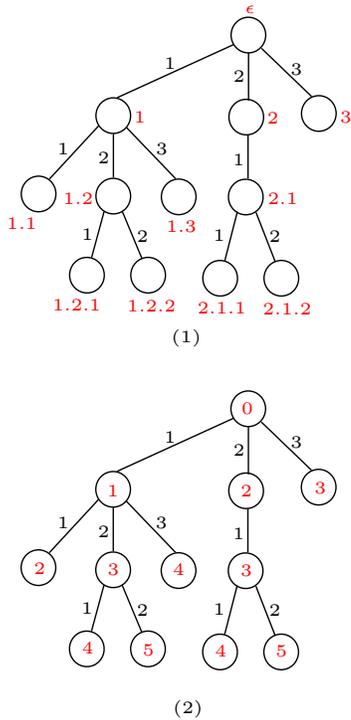,height=9.5cm}}\caption{Height of nodes in ordered trees} \label{fig:sem}
\end{figure}

%Conversely, given a semi-ordered tree, an ordered tree can be
%obtained by using the algorithm \textsc{Semi2Order} %where %the function $\mathit{Rename}$ is defined as follows
%(writing $\tau_p$ to mean the parent node of $\tau$).
%
%
%$$\begin{array}{llll}
%\mathit{Rename} (\tau)&:=&\left \{
%            \begin{array}{ll}
%                  \epsilon, &\mbox{if $\tau=0$},
%                  \\
%                 \mathit{Rename}(\tau_p)\cdot (\tau-\tau_p), ~~~~~~~~~&\mbox{if $\tau>0$}.
%                  \\
%            \end{array}
%  \right.
%\end{array}
%$$
%\begin{algorithm}[h] \caption{: {\scshape Semi2Order}}
%  {\bf Input}: A semi-ordered tree $T$\\
%  {\bf Output}: An ordered tree
%    \begin{algorithmic}[1]
% %   \STATE $Rename(0)$;
%    \WHILE{there exists a node $\tau\in T$ , $\tau$ is not renamed, but its parent $\tau_p$ has been renamed}
%    \STATE  $Rename(\tau)$;
%    \ENDWHILE
%    \STATE return $T$
%    \end{algorithmic}
%\end{algorithm}\\
%
%

With respect to
height of nodes in \emph{ordered trees}, order closedness for \emph{ordered trees} is defined
below.

\begin{Def}[Order Closedness] \rm An \emph{ordered tree} is \emph{order-closed} if, and only if, it satisfies the following three conditions:
\begin{itemize}
\item {\scshape C1:} The height of the root node is $0$;
\item {\scshape C2:} for each node with height being $n$,  the height of its left sibling (older) is $n-1$ if it
exists; and
\item {\scshape C3:} for each node with height being $n$, the height of its leftmost child
is $n+1$ if it exists.  \hfill{$\blacksquare$}
\end{itemize}
\end{Def}

If a node violates at least one of the above three conditions, it is an \emph{imbalanced node}. Further, we call an
imbalanced node, its younger siblings and their respective
descendants \emph{unstable nodes}; and the rest are
\emph{stable nodes}. (Thus, in particular, an imbalanced
node is unstable.) Accordingly, we observe that whether a node
in an \emph{ordered tree} is imbalanced (respectively unstable and
stable) depends on the height, which contains less information than implicit names, of nodes in the \emph{ordered trees}.
Note that if the \emph{ordered trees} we use are exactly the same as Schewe's \cite{Sven09},
i.e.~each integer in the name sequence is greater or equal to $0$,
the above properties for \emph{ordered tree} will not hold.

\subsection{Partial Order Relation}
We define a partial order relation $\preceq$ over nodes in a
tree. Let $\tau=t_1t_2\cdot\cdot\cdot t_n$ and $\tau'$ be nodes of
an \emph{ordered tree} $T$. We define $\tau'\preceq \tau$ just if $\tau'$
is a proper prefix $t_0t_1\ldots t_m$, $m<n$, of $\tau$ concatenated with
$s$ where $s = \epsilon$ or $s \in \{1,\cdots, t_{m+1}-1\}$.
The partial
order relation $\preceq$ precisely characterizes the relationship
(in stable or unstable situation) among nodes in an \emph{ordered tree}.
Suppose $\tau'\preceq\tau$ for two different nodes in $T$. In case
$\tau'$ is unstable, $\tau$ is also unstable. Accordingly, w.r.t.
each node $\tau$ in a tree $T$, a maximal chain $C(\tau)=\{\tau'\mid
\tau'\in T,\mbox{ and }\tau'\preceq \tau\}$ can be obtained. It is pointed out that for each node $\tau$ of an \emph{ordered tree}, $height(\tau)=|C(\tau)|$.
For instance, for $\tau_5$ in Fig.~\ref{fig:partial},
\begin{figure}[htp]
\centerline{\psfig{figure=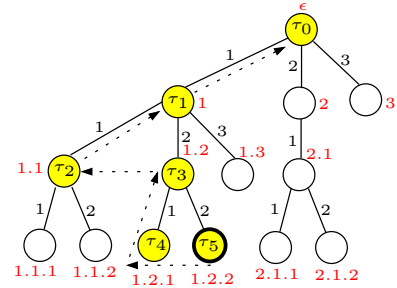,height=3.8cm}}\caption{Partial
order relation} \label{fig:partial}
\end{figure}
a maximal chain
$\tau_0\preceq\tau_1\preceq\tau_2\preceq\tau_3\preceq\tau_4\preceq\tau_5$
can be obtained. If $\tau_i$ is unstable for some $0\leq i\leq 4$,
so is $\tau_5$; but if $\tau_i$ is stable for all $0\leq i\leq 4$
then $\tau_5$ is stable. Further, the number of the partial order
chains in a tree depends on the number of the leaf nodes which is the
right-most (youngest) child of its parent, e.g. nodes 1.1.2, 1.2.2,
1.3, 2.1.2, and 3 in Fig.~\ref{fig:partial}.

\subsection{$n$-Full Ordered Tree}
%Now we define full ordered trees of B\"{u}chi automata.
We define \emph{$n$-full ordered tree} which contains all the nodes that
are possible to occur in an \emph{ordered tree} with $n$ nodes (called \emph{$n$-ordered tree} for convenience).

\begin{Def}[$n$-Full Ordered Tree] \emph{$n$-full ordered tree}, $n\in \mathbb{N}_{{>}0}$, is an \emph{ordered
tree} where for each non-leaf node $\tau$ in the tree,

\begin{itemize}
\item the right-most child
node $\tau'$ of $\tau$ is a leaf node with $|C(\tau)|=n$;
\item other children nodes of $\tau$ are non-leaf nodes.
\hfill{$\blacksquare$}
\end{itemize}
\end{Def}

The black part in Fig. \ref{fig:fullordertree} (1) - (6) shows $2$
to \emph{$7$-full ordered tree}, respectively. Intuitively, \emph{$n$-full
ordered tree}, $n>1$, can be obtained from \emph{$(n-1)$-full ordered tree}
$T$ by adding a new right-most child node to each node in $T$.

\begin{figure}[htp]
\centerline{\psfig{figure=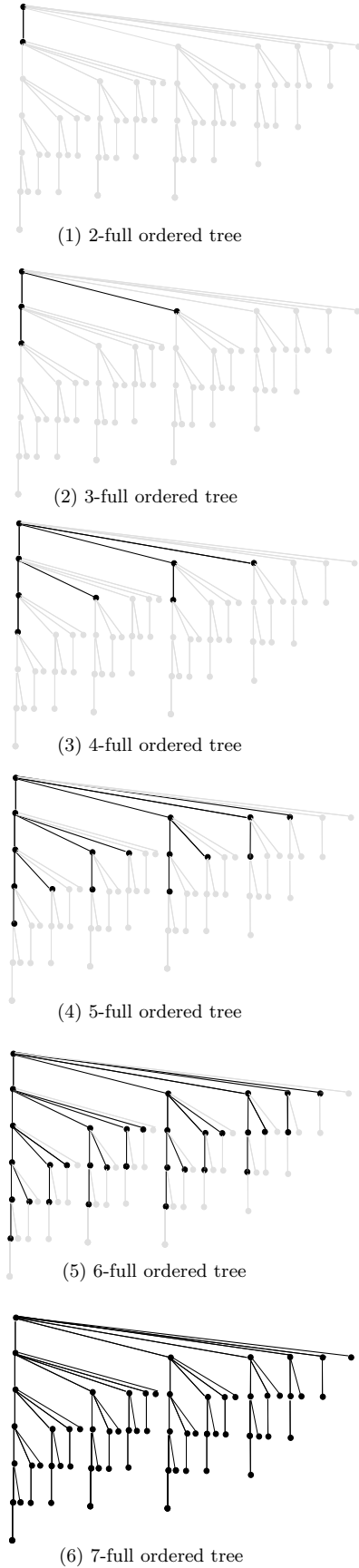,height=23cm}}\caption{$n$-Full
ordered tree} \label{fig:fullordertree}
\end{figure}

\begin{Fact}\label{fact:full} In \emph{$n$-full ordered tree}, there are $2^{n-1}$ different nodes.
%\begin{itemize} \item[(1)]  $2^{n-1}$ different nodes, and
%%\item[(2)] $2^{n-2}$ different leaf nodes, and
%\item[(2)] $(n-1)!$ different ordered trees containing $n$ nodes.
%\end{itemize}
\end{Fact}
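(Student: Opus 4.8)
The plan is to prove the identity by induction on $n$, counting the nodes through the recursive description of full ordered trees recorded just before the statement. Write $f(n)$ for the number of nodes of the \emph{$n$-full ordered tree}; the goal is $f(n)=2^{n-1}$. For the base case I would take $n=1$: the only ordered tree on a single node is the root $\epsilon$, which has no non-leaf node and therefore satisfies the defining clauses vacuously, so $f(1)=1=2^{0}$.

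For the inductive step I would invoke the recursive characterisation stated in the text, namely that for $n>1$ the \emph{$n$-full ordered tree} arises from the \emph{$(n-1)$-full ordered tree} $T$ by attaching to \emph{every} node of $T$ exactly one fresh right-most child. The decisive bookkeeping point is that this operation creates precisely one new node per existing node and destroys none, whence $f(n)=f(n-1)+f(n-1)=2f(n-1)$. Feeding in the base case, the induction immediately yields $f(n)=2^{n-1}$.

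If one prefers an argument that does not lean on the (informally stated) recursion, I would instead set up a bijection internal to a single \emph{$n$-full ordered tree}: by definition every non-leaf node $\tau$ owns exactly one leaf among its children, namely its right-most child, while all of its remaining children are non-leaves; conversely every leaf is the right-most child of a unique non-leaf node. Hence the number of leaves equals the number of non-leaf (internal) nodes, so $f(n)$ is twice the number of internal nodes; identifying the internal nodes of the \emph{$n$-full ordered tree} with the nodes of the \emph{$(n-1)$-full ordered tree} then recovers $f(n)=2f(n-1)$ and closes the same induction.

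\textbf{Main obstacle.} The counting is routine once the recursion is available; the genuine work is justifying that recursion, i.e.\ checking that the operation ``append one fresh right-most child to each node'' really produces the tree picked out by the definition and preserves its height/chain clause. The clean way to do this is to maintain, as an invariant of the induction, that in the \emph{$m$-full ordered tree} the quantity $h(\tau)+deg(\tau)$ equals the same constant ($m-1$) for every node $\tau$; this guarantees that all the freshly attached right-most children land at the common height required of the leaves of the next tree, that exactly the former leaves of $T$ become internal, and that no node is created twice. Establishing and propagating this invariant, rather than the arithmetic $2\cdot 2^{n-2}=2^{n-1}$, is where the care lies.
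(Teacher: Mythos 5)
Your proposal is correct and takes essentially the same route as the paper: an induction on $n$ using the recursion that the $n$-full ordered tree arises from the $(n-1)$-full one by attaching a single fresh right-most child to every node, which doubles the node count and gives $2^{n-1}$. The paper's own proof is exactly this doubling induction, stated even more tersely (it takes the recursion for granted); your leaf/internal-node bijection and the invariant $h(\tau)+deg_T(\tau)=n-1$ are additional justification the paper does not supply, and the invariant does indeed propagate under the recursion.
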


\begin{proof}
The fact is obvious in case $n=1$ or $2$. Suppose the fact holds
when $n=i$. That is there are totally $2^{i-1}$ nodes in \emph{$i$-full
ordered tree} $T$. Subsequently, \emph{$(i+1)$-full ordered} tree $T'$ can
be obtained by creating a new right-most child for each node in $T$.
Thus, there are  $2^i$ nodes in \emph{$(i+1)$-full ordered tree}. So the
fact holds.
\end{proof}

Fact \ref{fact:full} shows that there are totally $2^{n-1}$ different nodes (identified by implicit names) possible to occur in an \emph{$n$-ordered tree}.
Thus, the index complexity of Schewe's determinization is $2^{n-1}$ since implicit names of nodes in ordered trees containing $n$ nodes are utilized as the subscript when defining Rabin acceptance pairs.
%Further, there are exactly $(n-1)!$ different ordered
%trees with $n$ nodes that are contained in $n$-full ordered
%tree; and each node in $n$-full ordered tree is possible to occur in
%some ordered trees with $n$ nodes.
Fig. \ref{fig:ordertree}
\begin{figure}[htp]
\centerline{\psfig{figure=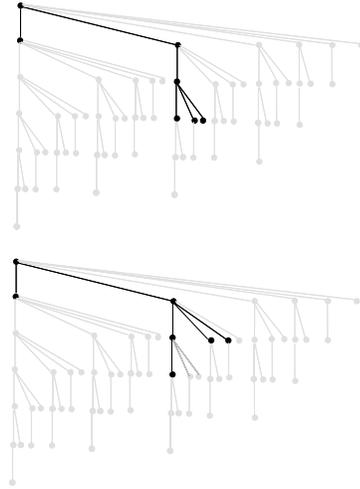,height=6.5cm}}\caption{
Ordered trees in full ordered tree} \label{fig:ordertree}
\end{figure}
shows two different \emph{$7$-ordered trees} (\emph{ordered trees with $7$ nodes}) in the \emph{$7$-full
ordered tree} depicted in Fig. \ref{fig:fullordertree} (6).

Let $\tau$ and $\tau'$ be two different nodes  in an \emph{n-full ordered
tree}. $\tau$ and $\tau'$ can occur in the same \emph{n-ordered tree} if, and only if, $|C(\tau)\cup C(\tau')|\leq n$. This indicates that if we assign a node $\tau$ of the \emph{$n$-full ordered tree} a flag $f$ such that the flag is unique in ordered trees with no more that $n$ nodes where $\tau$ may occur, any other node $\tau'$ where $|C(\tau)\cup C(\tau'')|\leq n$ should be assigned a flag different to $f$. To use less flags meanwhile, a node $\tau''$ where $C(\tau)\cup C(\tau'')>n$ can share the same flag with $\tau$. To further reduce the number of flags, we can use the height of each node as one property, and then for two nodes with different heights in an ordered tree, they can share the same flag. Accordingly, each node can be uniquely identified by its height in addition to flag in each ordered tree containing no more than $n$ nodes.

\begin{Lem}\label{Lem:fullnodes}
At least $2^{\lceil (n-1)/2\rceil-1}$ different flags are required in \emph{$n$-full ordered tree} such that each node in it can be uniquely identified in every \emph{$n$-ordered tree}.
\end{Lem}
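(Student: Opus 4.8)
The plan is to reduce the problem to an independent colouring problem at each height and then to count. Since the preceding discussion fixes that a node is identified by the pair (height, flag), two nodes that share an \emph{$n$-ordered tree} are automatically distinguished whenever their heights differ; hence a flag need only separate nodes of the \emph{same} height that can co-occur. I would therefore fix a height $h$, let $G_h$ be the graph whose vertices are the height-$h$ nodes of the \emph{$n$-full ordered tree} and whose edges join pairs that can appear together in some \emph{$n$-ordered tree}, and argue that the number of flags required is $\max_h\chi(G_h)$, the colours being reusable across distinct heights precisely because height is already part of the identifier.

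Next I would make the co-occurrence relation explicit. Using $h(\tau)=|C(\tau)|$ for every height-$h$ node together with the criterion $|C(\tau)\cup C(\tau')|\le n$, the identity $|C(\tau)\cup C(\tau')|=2h-|C(\tau)\cap C(\tau')|$ turns co-occurrence into a lower bound on the length of the shared initial chain; concretely, two height-$h$ nodes co-occur if, and only if, their $\preceq$-chains coincide down to a threshold height $k=k(h)$ that behaves like $2h+1-n$ (the exact constant following by examining the minimal order-closed, prefix-closed tree that contains both nodes). The key structural observation is that ``agreeing down to height $k$'' is transitive: if $\tau$ and $\tau'$, and $\tau'$ and $\tau''$, each share the chain down to height $k$, then all three carry the \emph{same} chain nodes at heights $0,\dots,k$, so $\tau$ and $\tau''$ agree as well. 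Thus the edge relation of $G_h$ is an equivalence, $G_h$ is a disjoint union of cliques, and $\chi(G_h)$ equals the size of its largest class.

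It then remains to count. I would use the bijection between height-$h$ nodes and compositions of $h$ (equivalently, subsets $S\subseteq\{1,\dots,h-1\}$ recording the partial sums), which yields $2^{h-1}$ nodes of height $h$ and re-derives Fact \ref{fact:full}. Agreement of the chains down to height $k$ is exactly agreement of the partial-sum data on $\{1,\dots,k-1\}$, so each equivalence class is obtained by fixing $S\cap\{1,\dots,k-1\}$ while $S\cap\{k,\dots,h-1\}$ ranges freely; hence every class has exactly $2^{h-k}=2^{n-h-1}$ nodes when $k\ge1$, whereas for $k\le1$ all $2^{h-1}$ height-$h$ nodes merge into a single class. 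Consequently $\chi(G_h)=2^{h-1}$ for $h\le n/2$ and $\chi(G_h)=2^{n-h-1}$ for $h\ge n/2$.

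Finally I would maximise over $h$. Both branches peak near $h=n/2$, and a parity check gives $\max_h\chi(G_h)=2^{\lceil (n-1)/2\rceil-1}$ (attained at $h=n/2$ for even $n$, and at $h=(n\pm1)/2$ for odd $n$), which is the claimed number of flags; assigning at each height the colours of an optimal proper colouring of $G_h$ and reusing the same palette across all heights then completes the construction. I expect the main obstacle to lie in the second and third steps: pinning down the exact threshold $k(h)$ (where the off-by-one between $|C(\tau)|$ and $h(\tau)$ must be handled with care) and proving that co-occurrence at a fixed height is genuinely an equivalence relation, since the clean ``$\chi=$ largest class'' reduction, and hence the exact exponent, depends entirely on that transitivity.
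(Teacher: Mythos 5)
Your proposal is correct, and its computational core is the same height-by-height count the paper itself performs: the paper's proof also fixes a height $e$, counts the height-$e$ nodes that can co-occur with a given one ($2^{e-1}$ when $e\leq\lceil (n-1)/2\rceil$ and $2^{n-e-1}$ otherwise), and maximizes over $e$ to obtain $2^{\lceil (n-1)/2\rceil-1}$; your two branches $2^{h-1}$ for $h\le n/2$ and $2^{n-h-1}$ for $h\ge n/2$ are exactly these numbers. Where you genuinely go beyond the paper is on the sufficiency side. The lemma claims this many flags are \emph{enough}, yet the paper's proof argues only necessity (it literally concludes that ``at least $2^{\lceil (n-1)/2\rceil-1}$ different flags are required'') and never shows that the per-height conflict structure can in fact be properly flagged with that many flags. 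Your key lemma---that co-occurrence of two equal-height nodes amounts to their $\preceq$-chains agreeing down to a fixed threshold height $k(h)$, that this agreement is transitive, and hence that each conflict graph $G_h$ is a disjoint union of cliques with $\chi(G_h)$ equal to the largest class size---is precisely the missing step that turns the clique count into an upper bound; the composition/subset bijection then makes the class sizes exact. One caveat you should indeed handle with the care you anticipate: the paper's stated criterion $|C(\tau)\cup C(\tau')|\leq n$ is itself off by two, since under the convention $h(\tau)=|C(\tau)|$ the minimal ordered tree containing both nodes is $C(\tau)\cup C(\tau')\cup\{\tau,\tau'\}$. Your threshold $k=2h+1-n$ and class size $2^{n-h-1}$ match the corrected criterion (and the counts the paper actually uses in its proof), so your argument is consistent, but the ``exact constant'' you defer is exactly where this discrepancy must be resolved explicitly.
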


\begin{proof}
Let $\tau$ be a node in an \emph{$n$-ordered tree} with $height(\tau)=h$.
 There are  $2^{h-1}$ different nodes with $height(\tau)=h$ in total that are possible to occur in an \emph{$n$-ordered tree} where $\tau$ appears in case  $h\leq\lceil (n-1)/2\rceil$; and $2^{n-h-1}$ otherwise.
Thus, in case $h=\lceil (n-1)/2\rceil$, the maximal number of flags are required. Therefore, at least $2^{\lceil (n-1)/2\rceil-1}$ different flags are required in \emph{$n$-full ordered tree} such that each node in any \emph{$n$-ordered tree} is uniquely identified.
\end{proof}

%\begin{Lem}\label{Lem:fullnodes}
%At least $2^{\lceil (n-1)/2\rceil-1}$ different flags are required in \emph{$n$-full ordered tree} such that each node in it can be uniquely identified in every \emph{$n$-ordered tree}.
%\end{Lem}
%
%
%
%\begin{proof}
%Let $\tau$ be a node in an \emph{$n$-ordered tree} with $height(\tau)=h$.
% There are  $2^{h-1}$ different nodes with $height(\tau)=h$ in total that are possible to occur in an \emph{$n$-ordered tree} where $\tau$ appears in case  $h\leq\lceil (n-1)/2\rceil$; and $2^{n-h-1}$ otherwise.
%Thus, in case $h=\lceil (n-1)/2\rceil$, the maximal number of flags are required. Therefore, at least $2^{\lceil (n-1)/2\rceil-1}$ different flags are required in \emph{$n$-full ordered tree} such that each node in any \emph{$n$-ordered tree} is uniquely identified.
%\end{proof}

%\begin{Lem}\label{Lem:fullnodes}
%There are totally $2^{\lfloor n/2\rfloor-1}$ different flags in $n$-full ordered tree with canonical identifiers.
%\end{Lem}
%
%
%
%\begin{proof}
%While a node $\tau$ with height being $h$ occurs in an ordered tree with $n$ nodes, there are possibly $2^{n-1-h}-1$ other nodes with height being $h$ present in the same ordered tree containing $n$ nodes where $\tau$ occurs. Especially, in case $h\leq \lfloor n/2\rfloor-1$, there are totally $2^{h-1}$ nodes ($\leq 2^{n-1-h}$) with height being $h$. Thus, in case $h=\lfloor n/2\rfloor-1$, the maximal number of flags are required. Therefore, there are totally $2^{\lfloor n/2\rfloor-1}$ different flags in $n$-full ordered tree with canonical identifiers.
%\end{proof}

\subsection{$n$-Full Ordered Tree with Canonical Identifiers}
Now we define \emph{$n$-full ordered tree} with identifiers where
every node in each \emph{$n$-ordered trees} is
annotated by a unique identifier.

\begin{Def}[$n$-Full Ordered Tree with Identifiers] \emph{$n$-full ordered tree} with identifiers is a pair $\langle T, I\rangle$
where $T$ is \emph{$n$-full ordered tree}, and $I:T\rightarrow (\mathbb{N},\mathbb{N}_{{>}0} )$ maps
each node $\tau$ in $T$ to its identifier $(h,f)$. Here $h$ is the height, $height(\tau)$, of $\tau$  and $f$ the flag of $\tau$, denoted by $flag(\tau)$,
such that for any two
different nodes, say $\tau$ and $\tau'$, in $T$,
$flag(\tau)\not=flag(\tau')$ if $|C(\tau)\cup C(\tau')|\leq n$ and $height(\tau)=height(\tau')$.
\hfill{$\blacksquare$}
\end{Def}

There are different ways that are useful in assigning each node
in \emph{$n$-full ordered tree} a proper flag to obtain an \emph{$n$-full ordered tree} with identifiers such that at most  $2^{\lceil (n-1)/2\rceil-1}$ different flags are utilized. Here we present a canonical one $flag_c$ where each node is annotated by the minimal positive integer
available in a predefined order.
%This will keep the number of the flags used in $n$-full ordered tree to be minimal meanwhile.

Let $rm=\tau_1\ldots\tau_x$ be the
sequence of the leaf nodes in \emph{$n$-full ordered tree} in left-first
order. For each $\tau_i$, $1\leq i\leq x$, in $rm$, a \emph{spine}
$sp_i$ is defined by $C(\tau_i)\setminus(C(\tau_1)\cup\ldots\cup
C(\tau_{i-1}))$. Now we arrange all nodes in the tree as a sequence
$se$ such that $\tau\leq \tau'$ iff $\tau\preceq \tau'$ or $\tau\in
sp_i$, $\tau'\in sp_j$, and $i\leq j$. We assign flags to the nodes
in sequence $se$ from left to right. For the first $n$ nodes in
$se$, we assign $1$ to $n$ to them in left-first order. In case the
first $i$, $i>n$, nodes in $se$ have been assigned flags, for the
$i+1^{th}$ node $\tau$ in the sequence, $flag_c(\tau)=Min(\{m\in
\mathbb{N}_{{>}0}\mid \mbox{for any }\tau'\mbox{ where }f(\tau')=m \mbox{ and }height(\tau')=height(\tau), \mbox{ it has }C(\tau')\cup
C(\tau)|>n\})$.

\begin{Def}(\emph{$n$-Full Ordered Tree with Canonical Identifiers}) \emph{$n$-full ordered tree with canonical identifiers} is a pair $\langle T, I_c\rangle$
where $T$ is \emph{$n$-full ordered tree}, and $I:T\rightarrow (\mathbb{N},\mathbb{N}_{{>}0} )$ maps
each node $\tau$ in $T$ to its canonical identifier $(h,f_c)$. Here $h$ is the height, $height(\tau)$, of $\tau$  and $f_c$ the canonical flag of $\tau$ obtained by $flag_c(\tau)$.
\hfill{$\blacksquare$}
\end{Def}

By the construction of \emph{$n$-Full Ordered Tree with Canonical Identifiers}, the following fact is obtained.
\begin{Fact}\label{fact:atmost}
That are at most $2^{\lceil (n-1)/2\rceil-1}$ different flags occurring in \emph{$n$-Full Ordered Tree with Canonical Identifiers}.
\end{Fact}

Fig. \ref{fig:flagc} depicts $7$-full ordered tree with canonical
identifers.
\begin{figure*}[htp]
\centerline{\psfig{figure=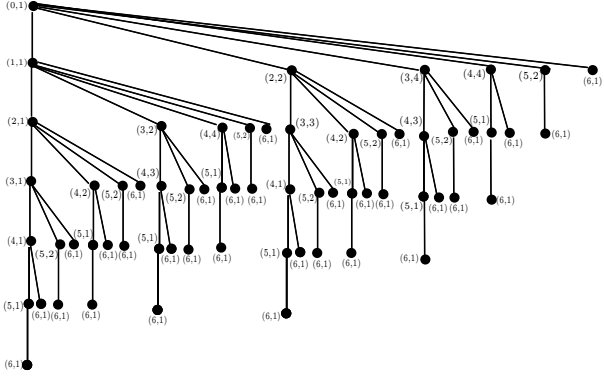,height=7.0cm}}\caption{$7$-full
ordered tree with canonical identifiers} \label{fig:flagc}
\end{figure*}
We have developed a program to generate $n$-full ordered tree with canonical identifers automatically (\url{http://web.xidian.edu.cn/ctian/files/20130417_183347.zip}).

\subsection{Ordered Trees with Canonical Identifiers}

On \emph{$n$-full ordered tree with canonical identifiers},  different \emph{n-ordered trees with canonical identifiers} can be found. Let $T$ be one of the \emph{n-ordered trees with canonical identifiers}. Each node in $T$ can be uniquely distinguished by its identifier. Accordingly, to reduce the index complexity of the determinization via history trees, in stead of the implicit names of nodes in \emph{ordered trees}, we can identify each node in an \emph{ordered tree} by its canonical identifer.

\section{Determinization of B\"{u}chi Automata  via History Trees with Canonical Identifiers} \label{Sec:compact}
This section presents the construction of deterministic Rabin
transition automata via \emph{history trees with canonical identifiers} from
nondeterministic B\"{u}chi automata.

\subsection{History Trees with Canonical Identifiers}
\emph{History trees with canonical Identifiers} are built upon \emph{ordered trees with
canonical identifiers}.

\begin{Def} [History Trees with Canonical Identifiers]\rm A \emph{history tree with canonical identifiers}
for a given nondeterministic B\"{u}chi automaton
$B=(\Sigma,Q,Q_0,\delta,\lambda)$ with $n$ states is a triple
$\anglebra{T,I_c,l}$ where $\langle T, I_c\rangle$ is an \emph{ordered tree
with canonical identifiers}, and $l:T\rightarrow 2^Q\setminus
\{\emptyset\}$ labels each node of $T$ with a non-empty subset of
$Q$, satisfying the following:
\begin{itemize}
\item[{\scshape P1}:] The label of each node is non-empty.
\item[{\scshape P2}:] The labels of different children of a node are
disjoint.
\item[{\scshape P3}:] The label of each node is a proper superset of
the union of the labels of its children. \hfill{$\blacksquare$}
\end{itemize}
\end{Def}

By Fact \ref{fact:nodes},
there are at most $n$ nodes in the \emph{history trees with canonical
identifiers} for a given B\"{u}chi automaton with $n$ states.
Fig.~\ref{fig:shta1}
\begin{figure}[htp]
\centerline{\psfig{figure=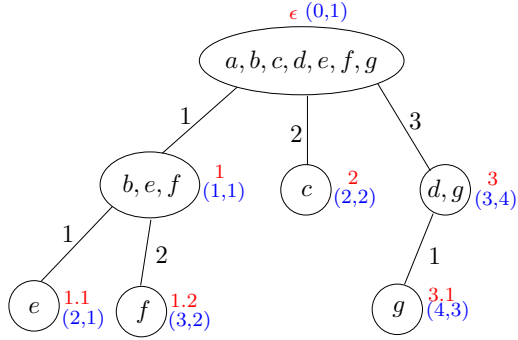,height=4.5cm}}\caption{A history
tree with canonical identifiers} \label{fig:shta1}
\end{figure}
shows a \emph{history tree with canonical identifiers} where the label of a node
is written inside the circle and the identifiers of nodes are in blue. It
is a modification of the history tree (of a B\"{u}chi automaton with $7$ states) in Fig.~\ref{fig:ht1} with
each node annotated by an extra identifier adopt from \emph{7-full ordered tree with canonical identifiers} in Fig. \ref{fig:flagc}.

\subsection{Construction of History Trees with Canonical Identifiers}

%To properly manage the identifiers of nodes throughout the determinization
%procedure, for each integer $i$ in $[2n]$, we maintain a set $S_i$
%consisting of nodes which have been annotated by identifier $i$. Note that
%to keep each node identifierged by the same integer whenever it occurs in
%a history tree, it is required that any two different nodes $\tau$
%and $\tau'$ both named as $i$ (in $S_i$) cannot occur in a history
%tree with identifiers simultaneously. To this end, we use the approach
%proposed in the last section to assign identifiers to nodes in the tree.

Fix a nondeterministic B\"{u}chi automaton $B=(\Sigma,Q,Q_0,$
$\delta,\lambda)$. The initial  \emph{history tree with canonical
identifiers} $\anglebra{T_0,I_{c,0},l_0}$ contains only one node
$\epsilon$ with $I_{c,0}(\epsilon)=(0,1)$  and label being the set
of initial states $Q_0$ of $B$.

Given a \emph{history tree with canonical identifiers} $\anglebra{T,I_c,l}$, and an
input letter $\sigma\in \Sigma$ of B\"{u}chi automaton $B$, the
$\sigma$-successor \emph{history tree with canonical identifiers}
$\anglebra{\widehat{T},\widehat{I_c},\widehat{l}}$ of
$\anglebra{T,I_c,l}$ is constructed in five steps below.

\paragraph*{\scshape Step 1} We construct the a labeled tree with canonical identifiers $\anglebra{T',I_c',l':T'\rightarrow 2^Q}$ such
that
\begin{itemize}
\item[(1)] $T' = T \cup \{ \tau'=\tau\cdot(deg_{T}(\tau)+1) \mid
 \mbox{ for any } \tau \in T \}$.
That is, $T'$ is formed by taken all nodes from $T$ and creating the
new child node $\tau\cdot(deg_{T}(\tau)+1)$ to any node $\tau \in
T$;

\item[(2)] the label $l'(\tau)=\delta(l(\tau),\sigma)$ of an old node $\tau\in
T$ is the set $\delta(l(\tau),\sigma)=\bigcup_{q\in
l(\tau)}\delta(q,\sigma)$ of $\sigma$-successors of the states in
the label of $\tau$;

\item[(3)] the label $l'(\tau')=\delta(l(\tau),\sigma)\cap
\lambda$ of a new node $\tau'=\tau\cdot (deg_{T}(\tau)+1)$ is the
set of final $\sigma$-successors of the states in the label of
$\tau$;

\item[(4)] the identifiers of nodes are adopted from the \emph{n-full ordered
tree with canonical identifiers}. Here $n=|Q|$.

%\item[(4)] for an old node $\tau\in T$, $f'(\tau)=f(\tau)$;
%
%\item[(5)] for a new created node $\tau'=\tau\cdot (deg_{T}(\tau)+1)$,
%$f'(\tau')=i$ if $\tau'\in S_i$ for some $i$; otherwise,
%$f'(\lochanged{\tau'})=j$ if $S_j=\emptyset$ for some $j$; and
%$f'(\lochanged{\tau'})=k$ if \tcchanged{$S_i\not=\emptyset$ for each
%$i$ in \tcchanged{$[2n]$}, and for each node $\tau$ in $S_k$,
%$|C(\tau)+C(\tau')|>n$}.

\end{itemize}

To illustrate the five steps of construction, we borrow the examples
from \cite{Sven09} for ease of comparison with Schewe's method.
Fig.~\ref{fig:transition} shows all transitions for an input letter
\begin{figure}[htp]
\centerline{\psfig{figure=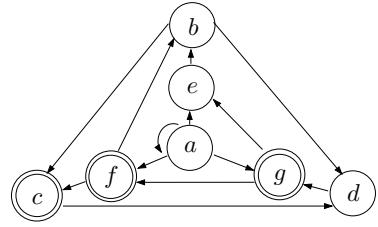,height=3.0cm}}\caption{Relevant
fragment of a B\"{u}chi automaton} \label{fig:transition}
\end{figure}
$\sigma$ from the states in the \emph{history tree with canonical identifiers} in
Fig.~\ref{fig:shta1}. The double circles indicate that the states
$c$, $f$, and $g$ are final states.

Fig.~\ref{fig:shta2}  depicts the tree resulting from the \emph{history
tree with canonical identifiers} in Fig. \ref{fig:shta1} for the B\"{u}chi
automaton and transition from Fig. \ref{fig:transition} after
{\scshape Step 1} of the construction procedure. Each node of the
tree from Fig.~\ref{fig:shta1} has spawned a new child (marked by
bold circle), whose label may be empty if, upon reading the input
letter from any state in the label of the parent node, none of the
new transition state is a final state.
%no final state is reachable upon the read input letter from any state in the label of the parent node.
The canonical identifers of nodes are adopted from \emph{7-full ordered tree with canonical identifers}.
The labels of nodes in gray will be deleted from the respective
label in {\scshape Step 2}. Note that in Fig. \ref{fig:shta2}, there may exist different nodes annotated with the same identifers, i.e. 1.3 and 4 are both annotated by $(4,4)$, since the tree resulted in this step may contain more than $n$ nodes (no more than $2n$ nodes in fact). However, at most one of the nodes with a same identifier will be kept in the eventually \emph{history tree with canonical identifers} resulted in {\scshape Step 5}.

\begin{figure}[htp]
\centerline{\psfig{figure=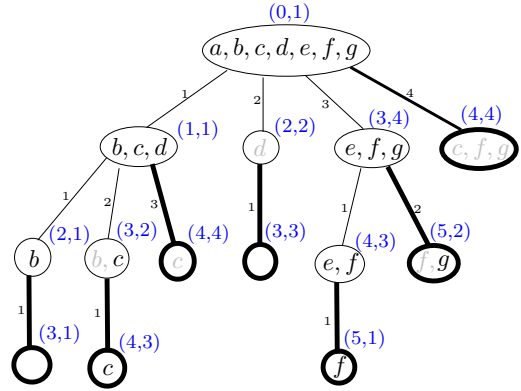,height=5.2cm}}\caption{{\scshape
Step 1} of the construction procedure} \label{fig:shta2}
\end{figure}

\paragraph*{\scshape Step 2} In this step, {\scshape P2} is re-established. We construct the
tree $\anglebra{T',I_c',l'': T'\rightarrow 2^Q}$, where $l''$ is
derived from $l'$ by removing each state {$q$} in the label of a
node $\tau$ and all its descendants if {$q$} appears in the label
$l'(\tau)$ of an older sibling.

Fig.~\ref{fig:shta3} shows the tree that results from {\scshape Step
2} of the construction procedure. In the resultant tree, the labels
of the siblings are pairwise disjoint, but may be empty, and the
union of the labels of the children of a node are not required to
form a proper subset of their parent's label. The part in dash lines
will be deleted in {\scshape Step 3}.

\begin{figure}[htp]
\centerline{\psfig{figure=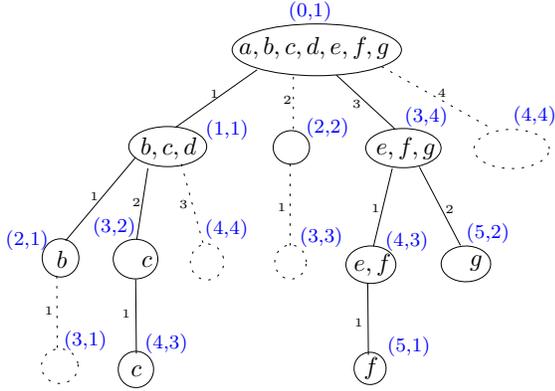,height=5.2cm}}\caption{{\scshape
Step 2} of the construction procedure} \label{fig:shta3}
\end{figure}

\paragraph*{\scshape Step 3} {\scshape P1} is re-established in
the third step. We construct the tree $\anglebra{T'', I_c'',l'':
T''\rightarrow 2^Q}$ by  removing all nodes $\tau$ with an empty
label $l''(\tau)=\emptyset$.

The resulting tree is depicted in Fig.~\ref{fig:shta4}.  The part in
dash lines will be deleted in the next step.
\begin{figure}[htp]
\centerline{\psfig{figure=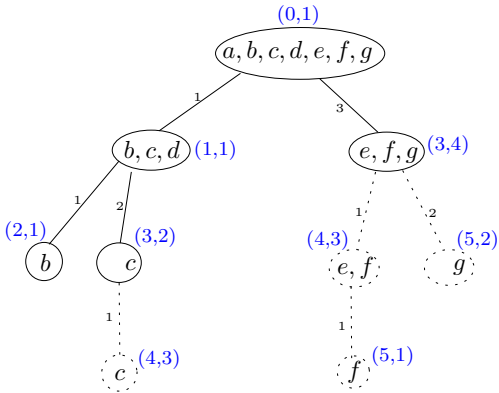,height=5.2cm}}\caption{{\scshape
Step 3} of the construction procedure} \label{fig:shta4}
\end{figure}

\paragraph*{\scshape Step 4} {\scshape P3} is re-established in
the this step. We construct the tree $\anglebra{T''', I_c''', l''':
T'''\rightarrow 2^Q}$ by removing all descendants of nodes $\tau$ if
the label of $\tau$ is not a proper superset of the union of the
labels of the children of $\tau$. The nodes whose descendant have
been deleted because of this rule are called \emph{accepting} nodes.
For each accepting node, we add a symbol $\oplus$ on it to
explicitly indicate that the node is accepting at the current step.

The resulting tree is depicted in Fig.~\ref{fig:shta5} (1).
\begin{figure}[htp]
\centerline{\psfig{figure=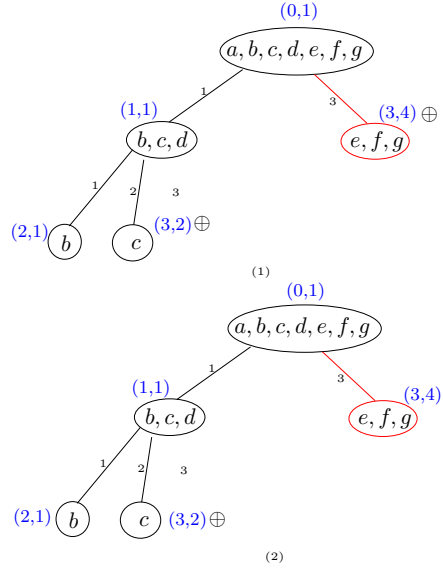,height=7.5cm}}\caption{{\scshape
Step 4} of the construction procedure} \label{fig:shta5}
\end{figure}
The labels of the siblings are pairwise disjoint, and form a proper
subset of their parent's label, but the tree might not be
order-closed. Two nodes with identifers being $(3,2)$ and $(3,4)$ are both accepting, as
marked by $\oplus$. The nodes that will be renamed for establishing
order closedness in {\scshape Step 5} are drawn in red.

\paragraph*{\scshape Step 5}
 The underling tree resulting from {\scshape Step 4} satisfies properties {\scshape P1-P3}, but it may not be order-closed, i.e.~there exist some imbalance
nodes in the tree (e.g. $(3,4)$). We construct $\sigma$-successor
$\anglebra{\widehat{T},\widehat{I_c},\widehat{l}: T\rightarrow
2^Q\setminus \{\emptyset\}}$ of $\anglebra{T,I_c,l}$
  by removing all the $\oplus$ symbols
and ``compressing" $T'''$ to an order-closed tree, using the
compression function $comp:T''' \rightarrow \omega^*$ that maps the
empty word $\epsilon$ to $\epsilon$, and $\tau\cdot i$ to
$comp(\tau)\cdot j$, where $j=|\{k<i\mid \tau\cdot k\in T'''\}|+1$
is the number of older siblings of $\tau\cdot i$ plus $1$.
Eventually, all the \emph{unstable} nodes, i.e.~imbalanced nodes as
well as the descendants of the imbalanced nodes, and the younger
siblings (and their descendants) of the imbalanced nodes, are
renamed, and their identifiers are also updated. Further, for each node
$\tau$ that is properly renamed in this step, we add a $\ominus$
symbol on the node before renaming (in the tree obtained in
{\scshape Step 4}) to mean that this node is unstable in {\scshape
Step 5} if there is no $\oplus$ symbol on the node, otherwise we
just remove the $\oplus$ symbol from the node.

Fig.~\ref{fig:shta6} shows the $\sigma$-successor
$\anglebra{\widehat{T},\widehat{I_c},\widehat{l}: T\rightarrow
2^Q\setminus \{\emptyset\}}$ of $\anglebra{T,I_c,l}$ obtained in
{\scshape Step 5} and Fig.~\ref{fig:shta1} (2) is the tree
originally obtained in {\scshape Step 4} but updated by {\scshape
Step 5}.

\begin{figure}[htp]
\centerline{\psfig{figure=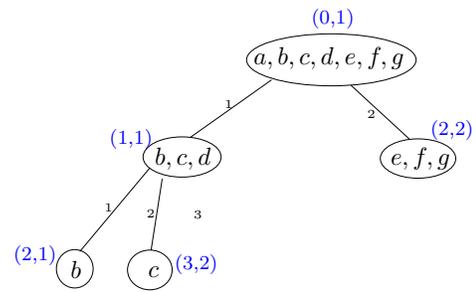,height=3.8cm}}\caption{{\scshape
Step 5} of the construction procedure} \label{fig:shta6}
\end{figure}

Notice that the symbols $\oplus$ and $\ominus$ occur only in the
trees created in the intermediate steps within the transitions, not
in the \emph{history trees with canonical identifiers} (states in the
deterministic Rabin automata) eventually constructed.

\subsection{Deterministic Rabin Transition Automata}
Based on the five-step construction procedure, for a given NBW, an
equivalent deterministic Rabin transition automaton can be
inductively constructed by:
\begin{itemize}
\item[(1)] Build the initial \emph{history tree with canonical
identifiers} $(T_0,I_{c,0},l_0)$;
\item[(2)] for each \emph{history tree with
canonical identifiers} $(T_i,I_{c,i},l_i)$, by the five steps presented before,
compute its $\sigma$-successor \emph{history tree with canonical identifiers}
$(T_{i+1},I_{c,i+1},l_{i+1})$ for each $\sigma\in \Sigma$.
\item[(3)] (2) is repeated until no new \emph{history trees with canonical identifiers} can
be created.
\end{itemize}
 On this underlying automata structure,
deterministic Rabin transition automata are defined.

\begin{Def} \rm The deterministic Rabin transition automaton equivalent to a given NBW $B$ is $RT=(\Sigma,Q_{RT},Q_{RT0},\delta_{RT},\lambda_{RT})$, where $Q_{RT}$ is the set of \emph{history trees with canonical identifiers} w.r.t $B$; $Q_{RT0}$ the initial \emph{history
trees with canonical identifiers}; $\delta_{RT}$ the transition
relation that is established during the construction of \emph{history
trees with canonical identifiers}; and $\lambda_{RT}=((A_{I1}, R_{I1}),
\ldots, (A_{Ii}, R_{Ik}))$ the Rabin acceptance condition.
\hfill{$\blacksquare$}
\end{Def}

In each Rabin pair $(A_{I}, R_{I})$, $I$ ranges over the
canonical identifiers appearing in the \emph{history trees with canonical
identifiers}. $A_{I}$ is the set of transitions through which node
$\tau$ with identifier being $I$ is accepting (annotated by
$\oplus$), while $R_{I}$ is the set of transitions through which
node $\tau$ with identifier being $I$ is unstable (annotated by
$\ominus$).

For an input word $\alpha:\omega\rightarrow \Sigma$, we call the
sequence $\Pi=\anglebra{T_0,I_{c,0},l_0},
\anglebra{T_1,I_{c,1},l_1}, \ldots $ of \emph{history trees with canonical
identifiers} that starts with the initial \emph{history trees with
canonical identifiers} $\anglebra{T_0,I_{c,0},l_0}$ and where, for
every $i\in \omega$, $\anglebra{T_i,I_{c,i},l_i}$ is followed by
$\alpha(i)$-successor $\anglebra{T_{i+1},I_{c,i+1},l_{i+1}}$, the
\emph{history trace with canonical identifiers} of $\alpha$. Let $\Pi$ be
the \emph{history trace with canonical identifiers} of a word
$\alpha$ on the deterministic Rabin transition automaton. $\alpha$
is accepted by the automaton if, and only if, $\mathsf{Inf}(\Pi)\cap
A_{Ii}\not=\emptyset$ and $\mathsf{Inf}(\Pi)\cap R_{Ii}=\emptyset$
for some $(A_{Ii}, R_{Ii})$, $1\leq i\leq k$.

Let $RT$ be the deterministic Rabin transition automaton obtained
from the given non-deterministic B\"{u}chi automaton $B$.

\begin{Thm}\label{thm:transition}
$L(RT)=L(B)$.
\end{Thm}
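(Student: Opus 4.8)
The plan is to reduce the statement to Schewe's correctness result rather than redo the whole argument. The essential observation is that the construction of \emph{history trees with canonical identifiers}, once the identifiers are erased, is literally Schewe's construction recalled in Section~\ref{Sec:history}: Steps~1--5 here build the same underlying ordered trees with the same labels and the same transition relation as Schewe's four steps, and a node is marked $\oplus$ (accepting) resp.\ $\ominus$ (unstable) in exactly the situations in which Schewe declares the corresponding node accepting resp.\ unstable. Consequently Lemma~\ref{sven1} applies verbatim to the history trace $\Pi$ of any word $\alpha$: we have $\alpha\in L(B)$ iff some node position $\tau\in\mathbb{N}_{>0}^*$ is eventually always stable and always eventually accepting along $\Pi$. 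The only genuine difference between $RT$ and Schewe's automaton is the \emph{indexing} of the Rabin condition --- by canonical identifiers $I=(h,f_c)$ here, by implicit names $\tau$ there --- so the whole burden is to show that the identifier-indexed condition accepts the same runs.

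The crux is a coexistence fact about canonical identifiers. If two distinct positions $\tau\neq\tau'$ satisfy $I(\tau)=I(\tau')$, then they share both height and flag; by the defining property of the flag assignment (distinct flags are forced on same-height nodes whenever $|C(\tau)\cup C(\tau')|\le n$), equality of flags at equal height forces $|C(\tau)\cup C(\tau')|>n$, i.e.\ $\tau$ and $\tau'$ can never occur together in a single $n$-ordered tree. Hence in every history tree of the run at most one position carries a given identifier $I$, and the transition sets decompose as $A_I=\bigcup_{I(\tau)=I}A_\tau$ and $R_I=\bigcup_{I(\tau)=I}R_\tau$, where $A_\tau,R_\tau$ denote Schewe's per-position accepting and unstable sets.

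For the forward inclusion $L(B)\subseteq L(RT)$ I would take the position $\tau$ supplied by Lemma~\ref{sven1} and set $I=I(\tau)$. From some point on $\tau$ is present and stable, so it contributes nothing to $R_I$ in the tail; and any other position $\tau'$ with $I(\tau')=I$ cannot coexist with the (present) $\tau$, so it too is absent in the tail and contributes nothing to $R_I$. Thus $\mathsf{Inf}(\Pi)\cap R_I=\emptyset$, while $\tau$ being accepting infinitely often gives $\mathsf{Inf}(\Pi)\cap A_I\neq\emptyset$; the pair $(A_I,R_I)$ accepts, so $\alpha\in L(RT)$. For the reverse inclusion I would take an accepting pair $(A_I,R_I)$ and a time $N$ past which no position carrying $I$ is ever unstable (possible since $\mathsf{Inf}(\Pi)\cap R_I=\emptyset$), then argue that the position carrying $I$ must stabilise to a single $\tau^*$ throughout the tail; together with $\mathsf{Inf}(\Pi)\cap A_I\neq\emptyset$ this makes $\tau^*$ eventually always stable and always eventually accepting, and Lemma~\ref{sven1} yields $\alpha\in L(B)$.

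The main obstacle is exactly this stabilisation step in the reverse direction: I must rule out that identifier $I$ is silently ``handed over'' between distinct tail positions without any $\ominus$ being recorded. This is where canonicity --- identifiers are static and consistent across all trees --- does the work. The argument is a close reading of Step~5: if no node is flagged unstable in a transition, then the tree produced in Step~4 was already order-closed, the compression function $comp$ acts as the identity on the surviving positions, and every retained position keeps its canonical identifier; therefore the position carrying $I$ can change only through a renaming, and a renaming of that position is precisely the event recorded as $\ominus$ on $I$ (hence counted in $R_I$). Making this bookkeeping airtight --- linking each change of the $I$-carrying position to a recorded $\ominus$ on $I$, including changes forced by the renaming of an ancestor or an older sibling --- is the delicate part; once it is in place, $\tau^*$ is well defined and the appeal to Lemma~\ref{sven1} closes the argument.
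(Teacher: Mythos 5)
Your proof follows the same route as the paper's: both reduce the theorem to Lemma~\ref{sven1} and then translate between Schewe's position-indexed acceptance and the identifier-indexed pairs $(A_I,R_I)$. In fact you go further than the paper, whose own proof simply treats ``the node with identifier $I_i$'' as a single well-defined position in both directions; you correctly recognise that this is precisely what has to be justified, and you isolate the right crux (coexistence of same-identifier positions, and silent ``handover'' of an identifier). Unfortunately, both of the arguments you supply to close that hole have genuine gaps.

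For your forward inclusion, you argue that once $\tau$ is present and stable, any other position $\tau'$ with $I(\tau')=I(\tau)$ is absent and so contributes nothing to $R_I$. But the $\ominus$ marks are placed on the intermediate trees of Steps 1--4, which satisfy P1--P3 (hence have at most $n$ nodes) but need \emph{not} be order-closed, whereas your coexistence fact only forbids two same-identifier positions inside an order-closed $n$-ordered tree. A node freshly spawned in Step~1 can survive to Step~4, be imbalanced, and receive $\ominus$ while a stable node elsewhere in the same intermediate tree carries the same identifier. Concretely, for $n=5$ the positions $1.2$ and $2.1$ cannot occur together in any $5$-ordered tree (six nodes would be forced), so $flag_c$ is permitted to give them equal flags --- and with only $2^{\lceil(n-1)/2\rceil-1}=2$ flags available for height-$3$ nodes, some collision of this kind is unavoidable; yet the non-order-closed intermediate tree $\{\epsilon,1,1.2,2,2.1\}$, arising from the history tree $\{\epsilon,1,1.1,2,2.1\}$ when the label of $1.1$ dies and the fresh node $1.2$ survives, contains both positions, with $1.2$ renamed (so $\ominus$, hitting $R_I$) and $2.1$ stable and possibly accepting (hitting $A_I$); this shape can recur forever, so your claim $\mathsf{Inf}(\Pi)\cap R_I=\emptyset$ fails. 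For your reverse inclusion, the Step-5 bookkeeping you describe only links \emph{renamings} to $\ominus$; but the carrier of $I$ can also change with no $\ominus$ ever recorded on $I$, because deletions (empty label in Step~3, or removal below an accepting node in Step~4) are never marked $\ominus$, and a later fresh Step-1 node can pick up the same canonical identifier. Hence finitely many hits of $R_I$ do not force a single position $\tau^*$ that is eventually always stable, and the appeal to Lemma~\ref{sven1} does not go through. Both gaps have the same root cause --- identifier uniqueness is guaranteed only on order-closed trees, while the acceptance marks live on intermediate trees and deleted carriers leave no trace in $R_I$ --- and it is worth saying that the paper's own two-line proof does not resolve these points either; it silently assumes exactly the two facts whose proofs you attempted and which, as stated, do not hold for the construction as described.
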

\begin{proof}
The theorem is proved by appealing to Lemma~\ref{sven1}.\\[1.2ex]
$\Rightarrow$: Given a \emph{history trace with canonical identifiers}
$\widehat{\Pi}$ such that $\mathsf{Inf}(\widehat{\Pi})\cap
A_{Ii}\not=\emptyset$ and $\mathsf{Inf}(\widehat{\Pi})\cap
R_{Ii}=\emptyset$, for some $i\geq 1$. By the construction of
\emph{history trees with canonical identifiers}, there exists some
transition between two \emph{history trees with canonical identifiers} (in
$A_{Ii}$) containing node $\tau$ with $Ii$ being the identifier,
such that $\tau$ is always eventually accepting since
$\mathsf{Inf}(\widehat{\Pi})\cap A_{Ii}\not=\emptyset$. Further
$\tau$ is eventually always stable because
$\mathsf{Inf}(\widehat{\Pi})\cap R_{Ii}=\emptyset$ (in case $\tau$
is unstable in some transition in the loop,
$\mathsf{Inf}(\widehat{\Pi})\cap R_{Ij}\not=\emptyset$).\\[1.1ex]
$\Leftarrow$: Suppose \emph{history trace with canonical identifiers} $\Pi$ contains node $\tau$,
 which is always eventually accepting and
eventually always stable. By the construction of Rabin automata,
there must exists $Ii$ which is the identifier of $\tau$ in the
\emph{history trees with canonical identifiers} such that
$\mathsf{Inf}(\Pi)\cap A_{Ii}\not=\emptyset$ since $\tau$ is always
eventually accepting (thus included in $A_{Ii}$). Further,
$\mathsf{Inf}(\Pi)\cap R_{Ii}=\emptyset$ since $\tau$ is eventually
always stable.
\end{proof}

\begin{Thm}\label{thm:con1}
For a given nondeterministic B\"{u}chi automaton $B$ with $n$
states, we can construct a deterministic Rabin transition automaton
with $o((1.65n)^n)$ states and $2^{\lceil (n-1)/2\rceil}$ accepting pairs at most that recognizes
the language $L(B)$ of $B$.
\end{Thm}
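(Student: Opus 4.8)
The theorem combines a state-complexity claim and an index-complexity claim, and the strategy is to inherit the former wholesale from Schewe and to establish only the latter via the canonical-identifier machinery just developed. Concretely, I would argue that the underlying automaton structure is literally unchanged from Schewe's construction: the set of states $Q_{RT}$ is the set of \emph{history trees with canonical identifiers}, but forgetting the identifier component $I_c$ yields exactly the \emph{history trees} of Section~\ref{Sec:history}, and the five-step transition here coincides with Schewe's four-step transition (Step~3 and Step~4 merely split Schewe's Step~3 into the ``remove empty label'' and ``re-establish P3'' phases). Hence the number of reachable states is still $hist(n)$, and Theorem~\ref{sven2} gives $hist(n)\subset o((1.65n)^n)$. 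Language correctness is already discharged by Theorem~\ref{thm:transition}. So the only genuinely new content is the count of Rabin pairs.

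\textbf{The index bound.} For the accepting-pair count I would proceed as follows. By construction, a Rabin pair $(A_I,R_I)$ is indexed not by an implicit node name but by a \emph{canonical identifier} $I=(h,f_c)$, where $h$ is a height and $f_c$ is a canonical flag. Thus the number of pairs equals the number of distinct identifiers $(h,f_c)$ that can actually label a node in some reachable history tree, i.e.\ the number of identifiers occurring in the \emph{$n$-full ordered tree with canonical identifiers}. By Fact~\ref{fact:nodes} every reachable tree has at most $n$ nodes, so every node carries a height $h$ with $1\le h\le n-1$ (the root, with $h=0$, is never accepting or unstable and contributes no pair), and by Lemma~\ref{Lem:fullnodes} the flags range over $\{1,\dots,2^{\lceil(n-1)/2\rceil-1}\}$. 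A naive product would give $(n-1)\cdot 2^{\lceil(n-1)/2\rceil-1}$ identifiers, which is not yet the claimed bound; the refinement is that, as established in the discussion preceding Lemma~\ref{Lem:fullnodes}, the number of distinct flags needed at a fixed height $e$ is only $2^{e-1}$ for $e\le\lceil(n-1)/2\rceil$ and $2^{n-e-1}$ otherwise. I would therefore sum the per-height flag counts over all admissible heights rather than taking a crude product, and show the total is exactly $2^{\lceil(n-1)/2\rceil}$ (noting the symmetry of the two regimes about the balance point $e=\lceil(n-1)/2\rceil$, which is why the exponent lands at $\lceil(n-1)/2\rceil$ rather than $\lceil(n-1)/2\rceil-1$).

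\textbf{The anticipated obstacle.} The delicate point — and the part I expect to require the most care — is justifying that indexing Rabin pairs by canonical identifiers $(h,f_c)$ rather than by implicit names does not corrupt the acceptance semantics, even though in a single transition tree (after Step~1) two distinct nodes may transiently share one identifier, e.g.\ the $(4,4)$ collision noted after Fig.~\ref{fig:shta2}. The resolution I would spell out is that such collisions occur only in the intermediate trees, which may contain up to $2n$ nodes, whereas the $\oplus$/$\ominus$ annotations that define membership in $A_I$ and $R_I$ are read off the final history tree of Step~5, in which at most $n$ nodes survive and the defining property of canonical identifiers (Lemma~\ref{Lem:fullnodes}) guarantees that any two coexisting nodes with equal height receive distinct flags. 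Thus within any genuine state every identifier names at most one node, so the argument of Theorem~\ref{thm:transition} — that $\mathsf{Inf}(\Pi)\cap A_{Ii}\neq\emptyset$ and $\mathsf{Inf}(\Pi)\cap R_{Ii}=\emptyset$ tracks a single node $\tau$ that is always eventually accepting and eventually always stable — remains valid. Once this well-definedness is secured, the theorem follows by combining the inherited state bound, the inherited correctness, and the flag count above.
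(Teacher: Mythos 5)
Your proposal is correct and takes essentially the same route as the paper's proof: the state bound is inherited by noting that erasing the canonical identifiers is injective on reachable trees (the paper's $histf(n)=hist(n)\subset o((1.65n)^n)$), correctness is delegated to Theorem~\ref{thm:transition}, and the pair count comes from summing the per-height flag counts of Lemma~\ref{Lem:fullnodes} --- precisely the paper's argument, including the same slight looseness in asserting that this sum lands at $2^{\lceil (n-1)/2\rceil}$. Your additional paragraph on transient identifier collisions only makes explicit what the paper already notes after {\scshape Step 1} of the construction (at most one node per identifier survives into the tree produced by {\scshape Step 5}), so it is a refinement of, not a departure from, the paper's approach.
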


\begin{proof}
For the state complexity, by the construction of \emph{history trees with
canonical identifiers},  for each node $\tau$ that is possible to occur in
a \emph{history tree} of a B\"{u}chi automaton with $n$ states, a unique
identifier is assigned to the node and keeps unchanged no matter whether it
occurs in a \emph{history tree}. % performs as a function from the nodes possible to occur in a history tree to [2n]
 Thus, we cannot find two distinct
\emph{history trees with canonical identifiers} in the determinization
construction of $B$ such that they coincide after the identifiers have
been erased. So, $histf(n)=hist(n)\subset o((1.65n)^n)$.
Here $histf(n)$ is the number of \emph{history
trees with canonical identifiers} for a given
nondeterministic B\"{u}chi automata with $n$ states.

Further, by Fact \ref{fact:atmost}, for the nodes with height being $h$, at most $2^{h-1}$ different flags are required  in case  $h\leq\lceil (n-1)/2\rceil$; and $2^{n-h-1}$ otherwise.
Thus, at most $2^{\lceil (n-1)/2\rceil}$ identifiers are required in \emph{$n$-full ordered tree}.
%Thus, in case $n$ is odd, at most $2^{\lceil (n-1)/2\rceil}$ identifiers are required in $n$-full ordered tree; otherwise, $2\times2^{\lceil (n-1)/2\rceil-2}+2^{\lceil (n-1)/2\rceil-1}$.
As a result, $2^{\lceil (n-1)/2\rceil}$  accepting pairs at most are
required at most in the obtained deterministic Rabin transition automata.
\end{proof}

By the lower bound result of the state complexity in \cite{CZ09}, the state complexity of the new construction is optimal. Further, by Lemma \ref{Lem:fullnodes}, the index complexity is already tight. Thus, we have the following corollary.
\begin{Cor}
Construction of deterministic Rabin transition automata from nondeterministic B\"{u}chi automata via \emph{history trees with
canonical identifiers} is optimal.
\end{Cor}

\section{From Nondeterministic B\"{u}chi Automata to Ordinary Rabin Automata}
\label{Sec:ordinary}
%\section{From NBWs to DRWs with Small Alphabet}
 %One obvious
%approach to transform an automaton with  Rabin conditions on the
%transitions to an automaton with Rabin conditions on the states is:
%move forward the transition by one step. The new states are pairs
%consisting of a state of the transition automaton and the input
%letter read in the next round. Thus, if the deterministic Rabin
%transition automaton has the run $\rho$ on an input word $\alpha$,
%then the resulting ordinary deterministic Rabin automaton has the
%run
%$\rho'=(\rho_0,\alpha_0),(\rho_1,\alpha_1),(\rho_2,\alpha_2),\ldots$.
%Accordingly, the following theorem can be achieved:
%
%
%
%\begin{Thm}
%For a given nondeterministic B\"{u}chi automaton $A$ with $n$
%states, we can construct a deterministic Rabin automaton with
%$|\Sigma|\times o((1.65n)^n)$ states and $2^{\lceil (n-1)/2\rceil}$  accepting pairs that
%recognizes the language of $A$. \hfill{$\blacksquare$}
%\end{Thm}
%
%In practice, compared with $o((1.65n)^n)$, the alphabets of automata
%are small. Therefore, we have the corollary below.
%
%
%\begin{Cor}
%For a given nondeterministic B\"{u}chi automaton $A$ with $n$ states
%over a small alphabet $\Sigma$, we can construct a deterministic
%Rabin automaton with $ o((1.65n)^n)$ states and $2^{\lceil (n-1)/2\rceil}$  accepting pairs
%that recognizes the language $L(A)$ of $A$. \hfill{$\blacksquare$}
%\end{Cor}
%
%

By moving the acceptance and unstable
information, i.e.~$\oplus$ and $\ominus$ symbols, from transitions to states (\emph{history trees
with canonical identifiers}), a deterministic Rabin transition automaton can be equivalently transformed as an ordinary deterministic Rabin automaton.
For convenience, we call \emph{history trees with canonical identifiers} decorated with acceptance and unstable information \emph{enriched history trees with canonical identifiers}.

Let $ehistf(n)$  be the number of \emph{enriched history
trees with canonical identifiers} for a given
nondeterministic B\"{u}chi automata with $n$ states. The following lemma can be obtained.

\begin{Lem}\label{Lem:state} For a given nondeterministic B\"{u}chi automaton $B$ with $n$ states,
we have $ehistf(n)\subset o((2.66n)^n)$.
\end{Lem}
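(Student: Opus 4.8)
The plan is to count the number of \emph{enriched history trees with canonical identifiers} $ehistf(n)$ by relating it to the plain count $histf(n)$, which we already know from Theorem~\ref{thm:con1} (and the bound $hist(n)\subset o((1.65n)^n)$ of Theorem~\ref{sven2}) satisfies $histf(n)=hist(n)\subset o((1.65n)^n)$. An enriched history tree is an ordinary history tree with canonical identifiers in which some nodes carry an $\oplus$ (accepting) decoration and some carry an $\ominus$ (unstable) decoration. Since each node carries at most one such decoration (by the construction in Step~4 and Step~5, a node is marked $\oplus$ when it becomes accepting and $\ominus$ when it is properly renamed, and the decorations are not applied simultaneously — the $\oplus$ is simply removed in the unstable case), each of the at most $n$ nodes of a given history tree (Fact~\ref{fact:nodes}) is in one of three states: undecorated, $\oplus$, or $\ominus$. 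This gives at most $3^{n}$ ways to decorate a fixed history tree, so a crude bound is $ehistf(n)\le 3^{n}\cdot histf(n)$.

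First I would make this counting argument precise, following exactly the style of Schewe's derivation of the $o((2.66n)^n)$ bound in Theorem~\ref{Thm:sven}. The key observation is that the number of nodes actually present in a history tree is not always $n$; if a tree has $m\le n$ nodes, then only $m$ positions can be decorated, contributing a factor of roughly $3^{m}$ rather than $3^{n}$. The sharp estimate therefore requires summing over the possible node-counts $m$, weighting the number of history trees with exactly $m$ nodes by $3^{m}$ (or, more carefully, by the number of valid $\{\oplus,\ominus\}$-decorations, which is bounded by $3^{m}$). The heart of the argument is that this weighted sum is dominated by the same asymptotic expression that governs $histf(n)$, with the extra decoration factor shifting the base from $1.65$ up to $2.66$, since $2.66 \approx 1.65 \cdot \phi$-type growth once the combinatorial optimum over $m$ is taken.

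The main step, and the one I expect to be the genuine obstacle, is establishing the precise asymptotic constant $2.66$ rather than the weaker $3\cdot 1.65 = 4.95$ that the naive product bound yields. This requires the same refined analysis Schewe uses: one parametrizes by the number of nodes $m$ and the labelling data, expresses $ehistf(n)$ as a sum of multinomial-type terms, and applies a saddle-point or Stirling-approximation estimate to locate the dominant term. The improvement from $4.95$ to $2.66$ comes from the fact that trees with fewer nodes (which admit fewer decorations) are combinatorially more numerous in a way that is already baked into the $hist(n)$ count, so the two optimizations interact rather than multiply independently. I would therefore structure the proof as: (i) reduce $ehistf(n)$ to a decorated-count sum over history trees with canonical identifiers; (ii) bound the decoration factor per tree by $3^{m}$; and (iii) invoke the same asymptotic machinery that produces the $o((2.66n)^n)$ bound in \cite{Sven09} for the ordinary Rabin case, observing that introducing canonical identifiers does not change the node-count statistics (by Theorem~\ref{thm:con1}, $histf(n)=hist(n)$), so the enriched count inherits the identical asymptotic bound $ehistf(n)\subset o((2.66n)^n)$.
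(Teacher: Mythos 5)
Your proposal ultimately lands on the paper's argument, but only after a detour that mislocates the difficulty. The paper's proof is a two-line reduction: an enriched history tree with canonical identifiers is an ordered tree decorated with labels, with accepting/unstable information, and with identifiers; stripping the identifiers yields \emph{exactly} Schewe's enriched history trees, which are the states of his ordinary deterministic Rabin automaton, so their number $ehist(n)$ is already bounded by $o((2.66n)^n)$ by Theorem~\ref{Thm:sven} --- no new asymptotic analysis is performed at all. The only paper-specific content is the observation that the canonical identifier of each node is a function of the underlying tree (it is inherited from the fixed $n$-full ordered tree with canonical identifiers), so the stripping map is injective and $ehistf(n)=ehist(n)$. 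Your step (iii) does state this observation, so your proposal is sound as written. What you should drop is the middle of your plan: the decomposition $ehistf(n)\le 3^{n}\cdot histf(n)$ and the claim that the ``genuine obstacle'' is improving $3\cdot 1.65=4.95$ to $2.66$ by redoing Schewe's saddle-point analysis. That obstacle is illusory: $2.66$ is not obtained in this paper by multiplying $1.65$ by a decoration factor and then optimizing; the cost of the decorations is internal to Schewe's Theorem~\ref{Thm:sven}, which counts the decorated objects directly and is simply cited as a black box. Treating it that way collapses your three-part plan to your step (iii) alone, which is precisely the paper's proof.
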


\begin{proof}
\emph{Enriched history trees with canonical identifiers} possible occurring in
determinization construction are actually \emph{ordered trees} decorated
with (1) labels, (2) canonical identifiers, and (3) accepting and unstable
information. It can be easily observed that \emph{ordered trees} decorated
with (1) labels and (3) accepting and unstable information, called
\emph{enriched history trees} are exactly the \emph{enriched
history trees} in \cite{Sven09} used by Schewe for defining the
ordinary Rabin acceptance condition. By Theorem \ref{Thm:sven},
$ehist(n)\subset o((2.66n)^n)$. Here $ehist(n)$ denotes the number
of the \emph{enriched history trees} for an NBW with $n$ states. Further,
$ehistf(n)=ehist(n)\subset o((2.66n)^n)$ since additional identifiers
 will not increase the number of trees.
\end{proof}

\begin{Thm}
For a given nondeterministic B\"{u}chi automaton $B$ with $n$
states, we can construct a deterministic Rabin automaton with
$o((2.66n)^n)$ states and at most $2^{\lceil (n-1)/2\rceil}$  accepting pairs that recognizes the
language $L(B)$ of $B$.
\end{Thm}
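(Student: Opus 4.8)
The plan is to reuse the deterministic Rabin transition automaton $RT$ produced in Theorem~\ref{thm:con1} and to convert its transition-based acceptance into an equivalent state-based one by pushing the $\oplus$ and $\ominus$ decorations off the transitions and onto their target states. First I would take as the states of the new automaton the \emph{enriched history trees with canonical identifiers}: each history tree with canonical identifiers is annotated with a record of which nodes were accepting (marked $\oplus$ in Step 4) and which were unstable (marked $\ominus$ in Step 5) in the transition that reached it. The transition relation is inherited directly from $\delta_{RT}$; the $\sigma$-successor of an enriched tree is computed by the same five-step construction, and the marks generated in Steps 4 and 5 are stored in the resulting state. This is precisely the enrichment described informally before Lemma~\ref{Lem:state}.

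Next I would define the Rabin pairs on states while keeping exactly the same index set as in the transition version. For each canonical identifier $I$, I set $A_I$ to be the set of enriched trees in which the node carrying identifier $I$ is marked $\oplus$, and $R_I$ the set of enriched trees in which that node is marked $\ominus$. Because $I$ still ranges only over the canonical identifiers supplied by the $n$-full ordered tree with canonical identifiers, the number of pairs is unchanged at $2^{\lceil (n-1)/2\rceil}$, exactly as counted in the proof of Theorem~\ref{thm:con1}. The state bound is then immediate from Lemma~\ref{Lem:state}, which already establishes $ehistf(n)\subset o((2.66n)^n)$, so no new counting is needed.

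The correctness argument is where the real work lies. The key observation is that a run of the state-based automaton on an input word $\alpha$ visits a given enriched tree infinitely often if, and only if, the run of $RT$ on $\alpha$ takes the corresponding marked transition infinitely often; this is the standard transition-to-state correspondence, and it holds here because the marks stored in an enriched state record exactly the accepting and unstable data of its unique incoming transition. Granting this, $\mathsf{Inf}$ of the state run meets $A_I$ (respectively avoids $R_I$) precisely when $\mathsf{Inf}$ of the transition run meets the transition-level $A_I$ (respectively avoids $R_I$), so the two acceptance conditions agree pair by pair. By Theorem~\ref{thm:transition}, equivalently by the node characterisation of Lemma~\ref{sven1}, the new automaton therefore recognises $L(B)$. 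The main obstacle I expect is making this transition-to-state correspondence fully rigorous: one must verify that distinct incoming transitions carrying different marks really do split into distinct enriched states, so that the correspondence between infinitely-taken marked transitions and infinitely-visited enriched states is a genuine bijection, and that no marking information attached to a given identifier $I$ is lost when the accepting or unstable status of that node changes along the run.
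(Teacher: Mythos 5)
Your proposal is correct and follows essentially the same route as the paper: the paper likewise moves the $\oplus$/$\ominus$ marks from transitions onto states to form \emph{enriched history trees with canonical identifiers}, cites Lemma~\ref{Lem:state} for the $o((2.66n)^n)$ state bound, and reuses the index-complexity argument of Theorem~\ref{thm:con1} for the $2^{\lceil (n-1)/2\rceil}$ pairs. In fact the paper's own proof is only these two citations, so your elaboration of the transition-to-state correspondence supplies detail the paper leaves implicit.
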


\begin{proof}
The state complexity has been obtained in  Lemma~\ref{Lem:state} and
the index complexity can be proved similar to Theorem
\ref{thm:con1}.
\end{proof}

\section{Conclusion} \label{Sec:con}
In this paper,  we present a transformation from B\"{u}chi automata
of size $n$ into deterministic Rabin transition automata with
$o((1.65n)^n)$ states and $2^{\lceil (n-1)/2\rceil}$ Rabin pairs. This improves the
current best construction with the same (and optimal) state
complexity, but $2^{n-1}$ Rabin pairs. We also present a
construction of ordinary deterministic Rabin automata from NBWs with
$o((2.66n)^n)$ states and $2^{\lceil (n-1)/2\rceil}$ accepting pairs.

%This improves state
%of the art for the transformation from NBW to DRW.

\end{document}